\documentclass[10pt, uplatex]{article}
\usepackage{color, amssymb, amsthm, amsmath, amsfonts, ascmac, comment, physics}
\usepackage{graphicx}
\usepackage{enumerate}
\usepackage{algorithm}
\usepackage{authblk}
\usepackage{algpseudocode}
\usepackage{cleveref}

\newtheorem{Def}{Definition}
\newtheorem*{Def*}{Definition}
\newtheorem{Thm}{Theorem}
\newtheorem{Prop}{Proposition}

\newtheorem{Lem}{Lemma}
\newtheorem{Cor}{Corollary}
\newtheorem{Fact}{Fact}
\newtheorem{Remark}{Remark}
\newtheorem{Algo}{Algorithm}
\newtheorem*{Thm1}{\rm\bf Theorem~\ref{Thm_tight_B}}
\newtheorem*{Prop1}{\rm\bf Proposition~\ref{Prop_poly_est_PAC}}

\crefname{Prop}{Proposition}{Propositions}
\crefname{Lem}{Lemma}{Lemmas}
\crefname{Fact}{Fact}{Facts}
\crefname{figure}{Fig}{Figures}
\crefname{Thm}{Theorem}{Theorems}
\crefname{Cor}{Corollary}{Corollaries}
\crefname{Def}{Definition}{Definitions}
\crefname{Algo}{Algorithm}{Algorithms}
\crefname{section}{Section}{Sections}

\renewcommand{\Pr}{\mathrm{Pr}}

\newcommand{\Bset}{\{0, 1\}}
\newcommand{\T}{\mathsf{T}}
\newcommand{\Mand}{\textrm{and}}
\newcommand{\E}{\mathbf{E}}
\newcommand{\Bias}{\mathbf{Bias}}
\newcommand{\Garb}{\mathsf{Garbage}}

\newcommand{\Init}{\textrm{init}}
\newcommand{\Sup}{\textrm{sup}}

\newcommand{\Range}{\mathrm{range}}
\newcommand{\Ghada}{\textsf{G-Hadamard}}

\newcommand{\Ud}{{\mathsf{U}(d)}}
\newcommand{\Sm}{{\mathfrak{S}_m}}
\newcommand{\REPf}{\mathsf{Rep}_{\varepsilon}(f)}
\newcommand{\SUPN}[1]{\|#1\|_{\sup}}
\newcommand{\LTN}[1]{\|#1\|_{L^2}}
\newcommand{\Ket}[1]{|#1\rangle}
\newcommand{\Est}[2]{\mathsf{Estimation}(#1, #2)}

\renewcommand{\Re}{\operatorname{Re}}
\renewcommand{\Im}{\operatorname{Im}}
\usepackage[top=30truemm,bottom=30truemm,left=20truemm,right=20truemm]{geometry}
\title{Another generalization of Hadamard test: Optimal sample complexities for learning functions on the unitary group}
\author{Daiki Suruga}
\affil{CFT PAN, Poland}
\begin{document}

\maketitle

\begin{abstract}
Estimating properties of unknown unitary operations is a fundamental task in quantum information science. 
While full unitary tomography requires a number of samples to the unknown unitary scaling linearly with the dimension (implying exponentially with the number of qubits), estimating specific functions of a unitary can be significantly more efficient. 
In this paper, we present a unified framework for the sample-efficient estimation of arbitrary square integrable functions $f: \Ud \to \mathbb{C}$, using only access to the controlled-unitary operation.
We first provide a tight characterization of the optimal sample complexity when the accuracy is measured by the averaged bias over the unitary $\Ud$. 
We then construct a sample-efficient estimation algorithm that becomes optimal under the Probably Approximately Correct (PAC) learning criterion for various classes of functions.

Applications include optimal estimation of matrix elements of irreducible representations, the trace, determinant, and general polynomial functions on $\Ud$. Our technique generalize the Hadamard test and leverage tools from representation theory, yielding both lower and upper bound on sample complexity.

\end{abstract}

\section{Introduction}

In quantum mechanics, any time evolution of a closed system is described by a unitary operator. 
As a result, many tasks in quantum information science involve estimating characteristics of a given, yet unknown unitary process. 
Examples of such tasks include the validation and certification of quantum circuits, error mitigation and correction in quantum processes, and the identification of hidden or unknown dynamics in black-box quantum systems. 
For such tasks, one needs to accurately estimate some characteristics of the unitaries, such as deviation from the ideal process (i.e., noise), eigenvalues and eigenvectors, or even the complete description of the unitary operator.
With the rapid theoretical and experimental advancement of quantum information technologies, developing efficient methods for estimating the properties of unknown unitary operations has become increasingly essential, and has therefore received much attention. 

A standard approach to estimating the characteristics of a given unitary operator is to reconstruct the entire unitary process—that is, to estimate all matrix elements of the unitary operator (sometimes up to a global phase). 
As a full characterization of a given unitary process is a fundamental task in quantum information science,  numerous studies have addressed the full unitary tomography problem. 
See, e.g., ~\cite{AJV01,CDS05,Hay06,YRC20,HKOT23}, as well as a simple method described in Nielsen and Chuang~\cite[Section~8.4.2]{CN10}. 
In particular, Ref.~\cite{HKOT23} recently showed that estimating a $d$-dimensional black-box unitary with high probability (when accuracy is measured by the diamond norm) requires $\Theta(d^2)$ samples. 
This implies that the number of required samples scales exponentially with the number of qubits.

On the other hand, many other studies focus on estimating a specific characteristic of a black-box unitary $U$; here the characteristic is formally defined as the value $f(U)$ of a function $f: \Ud \to \mathbb{C}$ (where $\Ud$ denotes the unitary group of dimension $d$).
For example, the (normalized) trace $\frac{1}{d} \Tr U$ of a unitary $U$ may be estimated by a simple application of the Hadamard test (see, e.g., ~\cite[Section~16.1]{Childs}) with only a constant number of queries to the controlled $U$ operation, no matter how many qubits the unitary $U$ acts on. 
This application of the Hadamard test provides an exponentially more sample-efficient method (with respect to the number of qubits) than a simple application of the full unitary tomography where the given unitary $U$ is directly estimated by the full tomography first, and then $\frac{1}{d}\Tr U$ is computed by classical computation.
As in this example, it is often the case that one provably reduces the number of samples significantly compared to the full unitary tomography, when focusing on a specific characteristic. 
Because of its provable efficiency, there are many works that focus on estimating a specific characteristic, including
the determinant $\det U$~\cite{ZBQ+25,AS25}, the eigenvalues~\cite{Kit95,DJSW07,SHF13,KLY15,MdW23}, and other quantities~\cite{LSS+20,SY23,FEK25}.

While there is a rich literature on estimating specific functions $f(U)$, only a few such functions, such as the trace and the determinant, have been systematically studied. For general functions $f: \Ud \to \mathbb{C}$, no efficient estimation scheme is currently known. It is therefore desirable to develop a unified, sample-efficient estimation framework that applies to a broad class of such functions.

\subsection{Our contributions}
The main purpose of this paper is to provide a unified estimation framework that allows for the construction of sample-efficient algorithms for estimating $f(U)$ for \emph{any choice} of the function $f(U)$.
 
Our results are twofold: 
\begin{enumerate}[(i)]
\item 
In our first result, we tightly characterize the sample complexity required to estimate a function $f$ accurately, when the accuracy is measured by \emph{the averaged bias}  over the Haar measure on $\Ud$. This works for any function $f$ that is square-integrable, i.e., $f \in L^2(\Ud)$.

\item In our second result, we provide an estimation algorithm based on the one designed for the first result. This algorithm become sample-optimal for estimating several kinds of functions even in the PAC (Probably Approximately Correct) learning framework, one of the most standard frameworks for tomography/estimation.

\end{enumerate}
In~\cref{subsub_First,subsub_Second}, each of the two results is explained in detail respectively.

\subsubsection{First result}\label{subsub_First}
To state our first result formally, we define the averaged bias $\Bias_G(\mathcal{A}, f)$ of an estimation algorithm $\mathcal{A}$ for a function $f$ as follows:

\begin{Def*}
Let $f: \Ud \to \mathbb{C}$.
We say a query algorithm $\mathcal{A}$ estimates the function $f$ correctly with the averaged bias over the group $G =\Ud$ less than $\varepsilon$ if and only if 

\begin{equation*}
\Bias_G(\mathcal{A}, f) := \E_{G}\left[ \Bias_g(\mathcal{A}, f)\right] < \varepsilon
\end{equation*}
where $\Bias_g(\mathcal{A}, f)$ is the bias of the algorithm $\mathcal{A}$ at $g \in \Ud$, and the expectation $\E_G$ is taken over the normalized Haar measure over the group $\Ud$.
\end{Def*}

One first result is~\cref{Thm_tight_B}, that shows the number of optimal query access for estimating $f$ under the condition $\Bias_G(\mathcal{A}, f) < \varepsilon$ is characterized by the quantity $\REPf$. The formal definition of the quantity $\REPf$ is defined later in~\cref{Def_REPf}.

\begin{Thm}\label{Thm_tight_B}
Let $f \in L^2(\Ud)$, and $\REPf$ be as in~\cref{Def_REPf}.
\begin{itemize}
\item For any query algorithm $\mathcal{A}$ that queries to \textrm{controlled}-g estimates $f$ with $\Bias_G(\mathcal{A}, f) < \varepsilon$,
then the algorithm $\mathcal{A}$ requires $\Omega(\REPf)$ query access. 
\item There is an estimation algorithm $\mathcal{A}$ with $O(\REPf)$ query access that satisfies $\Bias_G(\mathcal{A}, f) < \varepsilon$.
\end{itemize}
\end{Thm}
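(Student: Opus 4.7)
The plan is to combine the Peter--Weyl decomposition of $L^2(\Ud)$ with the observation that the expected output of any $k$-query algorithm is pinned, as a function of $g$, to a subspace of $L^2(\Ud)$ dictated by representation theory. Concretely, by Peter--Weyl we decompose $L^2(\Ud) \cong \bigoplus_{\lambda} V_\lambda^* \otimes V_\lambda$, so any $f \in L^2(\Ud)$ expands as $f(g) = \sum_{\lambda,i,j} c^\lambda_{ij}\,\pi_\lambda(g)_{ij}$, and I will read $\REPf$ (see \cref{Def_REPf}) as the smallest $k$ such that a specific, representation-theoretically defined subspace $W_k \subset L^2(\Ud)$ captures $f$ up to averaged error $\varepsilon$.

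\emph{Lower bound.} An algorithm $\mathcal{A}$ using $k$ calls to controlled-$g$ produces a final state whose amplitudes are polynomials of total degree at most $k$ in the entries of $g$; hence its measurement-outcome probabilities, and therefore the expected output $\mu_{\mathcal{A}}(g) := \E[\mathcal{A}(g)]$, are polynomials of degree at most $k$ in the entries of $g$ and at most $k$ in those of $\bar g$ (adaptivity is harmless, since composing sub-circuits only adds degrees). Regrouping these polynomials into isotypic components shows that $\mu_{\mathcal{A}}$ lies in $W_k$, the span of matrix coefficients $\pi_\lambda(g)_{ij}$ for irreps $\lambda$ whose polynomial degrees in $g$ and $\bar g$ are both at most $k$. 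Using the orthogonality of the Peter--Weyl basis, I would then lower bound $\Bias_G(\mathcal{A}, f) = \E_G[\Bias_g(\mathcal{A}, f)]$ by the $W_k$-projection error of $f$; by the definition of $\REPf$ this projection error exceeds $\varepsilon$ whenever $k < \REPf$, giving the $\Omega(\REPf)$ bound.

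\emph{Upper bound.} Conversely, truncate the Peter--Weyl expansion of $f$ to the coefficients of irreps lying inside $W_{\REPf}$. For each surviving matrix coefficient $\pi_\lambda(g)_{ij}$, invoke the generalized Hadamard test---the central technical device of this paper---which, given $m$ queries to controlled-$g$, returns an unbiased estimator of $\pi_\lambda(g)_{ij}$ for any $\lambda$ accessible at degree $m$. Combining these estimates with the weights $c^\lambda_{ij}$ yields an estimator whose expectation equals $P_{W_{\REPf}} f(g)$, so its averaged bias equals the truncation error and is therefore below $\varepsilon$; the overall query cost is $O(\REPf)$.

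\emph{Main obstacle.} The heart of the argument is obtaining a \emph{tight}, two-sided description of $W_k$: the upper bound on $W_k$ controls the lower bound, while a uniform realization of every matrix element in $W_k$ with $O(k)$ queries is required for the upper bound. Bridging these two sides---showing that each admissible irrep is realizable regardless of its dimension---is precisely what the generalized Hadamard test must accomplish. A secondary technical point is justifying the projection inequality for the specific functional defining $\Bias_G$; this is immediate in $L^2$ via Parseval on $\Ud$, and reduces to Cauchy--Schwarz in the other natural cases.
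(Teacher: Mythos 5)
Your lower bound is essentially the paper's own argument (\cref{Prop_Lower_B}): the expected output of a $k$-query algorithm is a polynomial of total degree at most $2k$ in the entries of $g$ and $\bar g$, hence lies in $\mathcal{Q}_{\leq 2k}(\Ud)$, which mixed Schur--Weyl duality identifies with a finite sum of isotypic components; Parseval then gives $\Bias_G(\mathcal{A},f)\geq \|Q^\perp_{\leq 2k}f\|^2_{L^2}\geq\varepsilon$ whenever $k\leq\REPf$. That part is fine.

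Your upper bound, however, does not deliver $O(\REPf)$ queries as written. You propose to run the generalized Hadamard test once \emph{for each} surviving matrix coefficient $\pi_\lambda(g)_{ij}$ in the truncation and then form the weighted sum $\sum_{\lambda,i,j} c^\lambda_{ij}\,\widehat{\pi_\lambda(g)_{ij}}$. The number of such coefficients is not controlled by $\REPf$ --- it grows with $d$ and with the degree of the truncation --- so your total query count is (number of nonzero coefficients) times $O(\REPf)$, not $O(\REPf)$. The paper (\cref{Prop_B_upper}) avoids this by producing the \emph{entire} truncation with a single run of the test per sample: write $Q_{\leq 2m_0}f(g)=\Tr\bigl[A\,\bigoplus_{0\leq n,n'\leq 2m_0}g^{\otimes n}\otimes g^{\ast\otimes n'}\bigr]$ for one matrix $A$, take the singular value decomposition $A=UDV$, sample an index $i$ with probability $\sigma_i/\|A\|_1$, estimate the single inner product $\langle e_i|V\bigl(\bigoplus\cdots\bigr)U|e_i\rangle$ with one invocation of the generalized Hadamard test ($O(\REPf)$ queries), and rescale by $\|A\|_1$. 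Because the averaged-bias criterion constrains only the expectation (not the variance), this randomized selection preserves unbiasedness, and the bias remains exactly the truncation error $\|Q^\perp_{\leq 2m_0}f\|^2_{L^2}<\varepsilon$. Some such importance-sampling over the terms of the decomposition is not an optional optimization here; without it the claimed query bound fails, so you need to add this step (or an equivalent randomization over your Peter--Weyl coefficients) to close the argument.
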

Together, these bounds imply that the optimal query complexity for estimation $f$ with accuracy $\varepsilon$ is
\begin{equation*}
\Theta(\REPf).
\end{equation*}

\paragraph{Remarks on~\cref{Thm_tight_B}.} We now give several remarks regarding~\cref{Thm_tight_B}.
\begin{itemize}
\item First, we remark that the measure of accuracy $\Bias_G$ is not the standard choice considered in the literature. 
The most commonly used measure is arguably the one defined in the PAC learning framework, which is in the scope of our second result.
Compared to the PAC-based criterion, our measure imposes in some sense a weaker requirement; the number of query access for accurately estimating $f(U)$ with respect to  the measure $\Bias_G$ is always smaller than with respect to  the PAC learning measure (under a mild condition) as shown in~\cref{Fact_B_Q}.
Nevertheless, the measure $\Bias_G$ is not merely an artificial construct, since the bias  is indeed the natural choice in classical/quantum estimation theory~\cite{CL98, Hel69,Hay16}, and there are research works, e.g., References~\cite{AJV01, CDPS04, CDS05, Hay06}, that naturally consider the average of cost functions over the unitary group when estimating some properties of an unknown unitary.

\item We also remark that our estimation algorithm assumes query access only to the controlled-unitary operation C-$U$, which cannot, in general, be replaced by access to $U$ alone, since different unitaries can produce the same unitary channels\footnote{Consider $U$ and $e^{i \theta}U$, even though recent papers~\cite{CCP+24, TW25} show query access to the controlled-U operation can be replaced by that of to the original $U$ operation under some conditions.}.
This contrasts with several other studies~\cite{Kit95, BHMT02, GSLW19, vACGN23} that allow their algorithms access to additional types of queries such as $U^*$, C-$U^*$, or multiple variants in combination with C-$U$.
Since in general it is hard~\cite{GST24} to construct these unitaries only from C-$U$, our algorithm has the desirable property of relying solely on accesses to C-$U$. 
\par
Furthermore, our lower bound holds even for algorithms that are permitted access to other types of queries such as $U$, $U^*$, C-$U^*$. Therefore our characterization remains tight for a wider class of query models.

\item Third, our result applies to a broad class of functions--specifically, the set of square-integrable functions $L^2(\Ud)$. 
This space includes all continuous functions as well as certain discontinuous functions. In particular, our result holds for any continuous functions, including natural examples such as $\Tr U$ and $\det U$, as well as for some discontinuous functions--for example,
\begin{equation*}
f(U)
=
\begin{cases}
1& \text{if $U \in A$},\\
0& \text{otherwise}\\
\end{cases}
\end{equation*}
where $A \subset \Ud$ is open or closed. These observations support the wide applicability of our result.

\item Lastly, we discuss the difficulty of computing the quantity $\REPf$. It is generally difficult to compute $\REPf$ accurately, and efficient computation of the quantity is beyond the scope of this paper. 
Nevertheless, for specific examples, we present methods to obtain simpler forms of $\REPf$ in~\cref{subsec_Simp}.
\end{itemize}

\subsubsection{Second result}\label{subsub_Second}
In our second result, we aim to estimate a function $f: \Ud \to \mathbb{C}$ accurately under the PAC criterion.
The definition of the PAC criterion is as follows:

\begin{Def*}
Let $f: \Ud \to \mathbb{C}$.
We say an algorithm $\mathcal{A}$ that has query access to C-$U$ estimates the function $f$ correctly with the precision parameters $(\varepsilon, \delta)$ if and only if 

\begin{equation*}
\forall g \in \Ud, \quad \Pr_S\left(|f(g) - \hat{f}(S)| > \varepsilon \right) < \delta.
\end{equation*}
\end{Def*}

Our second result, stated in~\cref{Prop_poly_est_PAC}, shows an estimation algorithm for a function $f(U)$ when $f(U)$ is a polynomial of $u_{ij}$'s and $\bar{u}_{ij}$'s with degree $\leq m$, where $U = (u_{ij})_{1 \leq i,j\leq d}$ and $\bar{u}_{ij}$ is the complex conjugate of $u_{ij}$.

\begin{Prop}\label{Prop_poly_est_PAC}
Let $f(U)$ be a polynomial of $u_{ij}$'s and $\bar{u}_{ij}$'s with degree $\leq m$.
Then there is an estimation algorithm for $f$ under the PAC criterion that uses
$O\left(\frac{\|A\|_1^2 \log \frac{1}{\delta}}{\varepsilon^2 }\cdot m\right)$ queries, for any matrix $A$ satisfying 
\begin{equation}\label{eq_Prop_poly_PAC}
f(g) =\Tr A\left(\bigoplus_{0 \leq n, n' \leq m} g^{\otimes n} \otimes g^{\ast \otimes n'} \otimes I_E\right).
\end{equation}
\end{Prop}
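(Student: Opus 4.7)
The plan is to reduce the estimation of $f(g)=\Tr\bigl(A\,V(g)\bigr)$, with $V(g):=\bigoplus_{0\le n,n'\le m} g^{\otimes n}\otimes g^{*\otimes n'}\otimes I_E$, to a collection of generalized Hadamard tests of the type developed for the upper bound in~\cref{Thm_tight_B}, together with an importance-sampling decomposition of $A$ and a Hoeffding-type concentration argument.

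First I would take the singular value decomposition $A=\sum_k \sigma_k|\phi_k\rangle\langle\psi_k|$ with $\sigma_k\ge 0$, $\|\phi_k\|=\|\psi_k\|=1$, and $\sum_k\sigma_k=\|A\|_1$, so that~\eqref{eq_Prop_poly_PAC} becomes
\begin{equation*}
f(g)=\sum_k \sigma_k \,\langle\psi_k|V(g)|\phi_k\rangle.
\end{equation*}
For each $k$, applying the generalized Hadamard test to the input vectors $|\phi_k\rangle$ and $|\psi_k\rangle$ produces a complex random variable $X_k$ with $|X_k|\le 1$ and $\E[X_k]=\langle\psi_k|V(g)|\phi_k\rangle$. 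Any single block of $V(g)$ uses at most $n\le m$ copies of $g$ and $n'\le m$ copies of $g^*$, and each conjugated factor is simulated from C-$g$ via the standard trick of applying C-$g$ to one half of a maximally entangled pair (which realises $g^T$, and hence matrix elements of $g^*$), so one run of the test costs $O(m)$ queries to C-$g$.

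The matrix $A$ is then folded in by importance sampling. Drawing an index $k$ with probability $\sigma_k/\|A\|_1$, running the Hadamard test for that $k$, and returning $Y:=\|A\|_1\,X_k$ yields an estimator with $\E[Y]=f(g)$ and $|Y|\le\|A\|_1$. Taking the empirical mean $\hat f(S)$ of $N=O\bigl(\|A\|_1^2\log(1/\delta)/\varepsilon^2\bigr)$ i.i.d.\ copies of $Y$ and invoking Hoeffding's inequality separately for $\Re Y$ and $\Im Y$ gives $\Pr\bigl(|\hat f(S)-f(g)|>\varepsilon\bigr)<\delta$ for every $g\in\Ud$. The total query cost is $N$ times $O(m)$, matching the claimed bound $O\bigl(\|A\|_1^2 m\log(1/\delta)/\varepsilon^2\bigr)$.

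The main obstacle I foresee is in the second step: making sure that a \emph{single} generalized Hadamard test can estimate $\langle\psi_k|V(g)|\phi_k\rangle$ at only $O(m)$ queries, despite $V(g)$ being a direct sum of $(m+1)^2$ tensor-product blocks mixing $g$ and $g^*$. This requires implementing the block-diagonal controlled unitary so that a single chain of at most $m$ controlled-$g$'s and $m$ EPR-based controlled-$g$'s (realising the $g^*$ factors) is shared across all blocks, with the index $(n,n')$ acting only as extra control; verifying that the resulting test is unbiased and has output bounded by one in magnitude is the technical core. Once the generalized Hadamard test already used in the proof of~\cref{Thm_tight_B} is re-used in this way, the remainder of the argument is a routine concentration calculation.
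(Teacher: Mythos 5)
Your overall architecture --- singular value decomposition of $A$, importance sampling of the index $k$ with probability $\sigma_k/\|A\|_1$, rescaling by $\|A\|_1$, one run of a generalized Hadamard test per sample, and Hoeffding's inequality applied separately to the real and imaginary parts --- is exactly the paper's proof, so the skeleton is fine. The one place where you depart from the paper is the mechanism you sketch for producing the $g^{\ast\otimes n'}$ tensor factors, and that mechanism does not work as stated: applying C-$g$ to one half of a maximally entangled pair realizes $g^{\T}$ on the other half, whose matrix elements are $g_{ji}$, not $\bar g_{ji}$; passing from $g^{\T}$ to $g^{\ast}$ would require complex conjugation, which is antiunitary, and the paper explicitly emphasizes (citing~\cite{GST24}) that queries such as C-$g^{\ast}$ cannot in general be synthesized from C-$g$ alone.

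The paper's \Ghada~circuit (\cref{Algo_G-Hadamard}) obtains the conjugate factors by a different route that costs nothing extra: the chain $\prod_n C_n(g)$ is applied to the $\Ket{\varphi}$ branch of the superposition and $\prod_n C'_n(g)$ to the $\Ket{\psi}$ branch, and the interference term in the final measurement probability is $\langle \varphi |(I \oplus g^*)_{AB}^{\otimes m} \otimes (I \oplus g)_{CD}^{\otimes m} \otimes I_E \Ket{\psi}$ --- the $g^{\ast}$ arises as the adjoint of the unitary acting on the bra side, using only C-$g$ queries. Embedding $\bigoplus_{0\le n,n'\le m} g^{\otimes n}\otimes g^{\ast\otimes n'}\otimes I_E$ as a sub-block of this operator (so that the index $(n,n')$ is encoded in the control registers $A,C$, as you anticipated) then gives the unbiased, magnitude-at-most-one estimator you need at $O(m)$ queries per run. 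If you replace your EPR-based step with this construction --- which is precisely what the proof of \cref{Prop_B_upper} already provides and what the paper's proof of \cref{Prop_poly_est_PAC} reuses --- the rest of your argument goes through and reproduces the claimed bound.
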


\cref{Prop_poly_est_PAC} shows that to get an upper bound on the query complexity for estimating $f$ under the PAC criterion, it is sufficient only to find a matrix $A$ that satisfies the constraint~\eqref{eq_Prop_poly_PAC} and has a small $L^1$ norm.
Based on~\cref{Prop_poly_est_PAC}, we give matching upper bounds for the (normalized) trace, the determinant, and other specific functions, in~\cref{subsec_Opt}.
All of the upper bounds given in~\cref{subsec_Opt} are tight when $\varepsilon$ and $\delta$ are sufficiently small constants.

\par
Notably, one of our applications of~\cref{Prop_poly_est_PAC} shows the optimal query complexity for every matrix element $\pi_{i,j}(g)$ of a unitary irreducible representation $\pi(g) =(\pi_{i, j}(g))$, a fundamental quantity in representation theory.

This quantity naturally appears, for example, in a generalization of Fourier analysis. In the generalization of Fourier analysis--specifically the harmonic analysis over compact groups--any function $f \in L^2(\Ud)$ is decomposed as a linear decomposition of the matrix elements of irreducible representations;
\begin{equation*}
f(g)
= \sum_{\pi, i, j}a_{i,j}^\pi \pi_{i,j}(g)
\end{equation*}
for some $a_{i,j}^\pi \in \mathbb{C}$. 
As representation theory provides powerful tools in quantum computing, there is a considerable number of works that focus on the computation of the matrix elements or its relevant quantities~\cite{Jor09, MS14,Cir24,BCG+24,BNZ25,LH25,PAN25,BGHS25}. 
For example, Ref.~\cite{Jor09} considers the gate complexity for estimating the matrix elements. However, the query complexity of the matrix elements was not known before this paper.

\subsection{Proof techniques}

We now briefly explain how to prove the lower bounds and the upper bounds in~\cref{Thm_tight_B} and~\cref{Prop_poly_est_PAC} respectively.
\paragraph{Lower bounds.}
To show the lower bound, let $\mathcal{A}$ be a query algorithm that uses $m$ queries for estimating a function $f$ with $\Bias_G < \varepsilon$.
First, we observe that the expectation $\E[\mathcal{A}|g]$ of the estimates of $\mathcal{A}$, when the unknown unitary is $g \in \Ud$, is a polynomial as the matrix elements of $g$ and $g^*$ with degree $\leq 2 m$.
Moreover, from the constraint on the bias: $\Bias_G < \varepsilon$, it follows that the $L^2$ distance $\|\E[{\mathcal{A}}|g] - f(g)\|_{L^2}$ is small.  This means the expectation $\E[\mathcal{A}|g]$ approximates the function $f$ well.
However several representation theory techniques tell the function is decomposed by 
\begin{equation*}
f
= \mathsf{poly}_{\leq 2m}f \oplus \mathsf{poly}_{\leq 2m}^\perp f \text{~satisfying~} \|f\|^2_{L^2} = \|\mathsf{poly}_{\leq 2m}f\|^2_{L^2} + \|\mathsf{poly}_{\leq 2m}^\perp f\|^2_{L^2},
\end{equation*}
where $\mathsf{poly}_{\leq 2m}f$ represents the polynomial with degree $\leq m$ that approximates the original function $f$ best, and $\mathsf{poly}_{\leq 2m}^\perp f$ is the rest.
Therefore if $\|\mathsf{poly}_{\leq 2m}^\perp f\|_{L^2}$ is large, the expectation $\E[\mathcal{A}|g]$ cannot approximate the original $f$ well.
This implies that the number $m$ of queries has to be large enough so that the quantity $\|\mathsf{poly}_{\leq 2m}^\perp f\|_{L^2}$ becomes sufficiently small, yielding a query lower bound.
\par
Similar methods sometimes appear in the literature, e.g.,~References~\cite{BBC+01, SY23}.
\paragraph{Upper bounds.}
To construct the query algorithm that attains our upper bounds, we generalize the Hadamard test and obtain an unbiased estimator for any polynomial $f(g)$. 
The standard Hadamard test provides an unbiased estimator for $\Re \Tr\rho U$ and $\Im \Tr\rho U$, thus also for $\Tr \rho U$ with one query access to C-$U$ for any state $\rho$. 
Generalizing this, we first develop an algorithm $\Ghada$ that provides an unbiased estimator for the inner product
\begin{equation*}
\langle \varphi |\left(\bigoplus_{0 \leq n, n' \leq m} g^{\otimes n} \otimes g^{\ast \otimes n'} \otimes I_E \right) \Ket{{\psi}}
\end{equation*}
for any pure states $\Ket{\varphi}, \Ket{\psi}$, with $2m$ query access to C-$g$ operation in~\cref{Fig_circuit_whole}.
We then observe that any polynomial function $f(g)$ with degree $\leq m$ may be represented as
\begin{equation*}
f(g)
= \Tr A_f \left(\bigoplus_{0 \leq n, n' \leq m} g^{\otimes n} \otimes g^{\ast \otimes n'} \otimes I_E \right)
\end{equation*}
for some matrix $A_f$.
The singular value decomposition on the matrix $A_f$ then tells
\begin{equation*}
f(g)
= \sum_{\substack{\sigma_i\\ \text{singular values of $A$}}} \sigma_i \langle \varphi_i|\left(\bigoplus_{0 \leq n, n' \leq m} g^{\otimes n} \otimes g^{\ast \otimes n'} \otimes I_E \right)\Ket{\psi_i},
\end{equation*}
and the $\Ghada$ is now used to estimate each of the terms indexed by $i$. This yields an unbiased estimator for a polynomial $f$ with degree $\leq m$.

Other generalizations of the Hadamard test has been investigated much in the literature e.g.,~\cite{WRZ+21, WBP+24, FEK25}.

\subsection{Organization of the paper}
The remainder of this paper is organized as follows. \cref{sec_Preli} introduces preliminary materials necessary for understanding the subsequent sections.  We then prove the lower bound in~\cref{sec_Lower} and the upper bound in~\cref{sec_Upper}. Their applications are discussed in~\cref{sec_Appli}, and some proofs are left to~\cref{sec_App}.

\section{Preliminaries}\label{sec_Preli}
Throughout this paper, the set of all $n \times m$ complex-valued matrices is denoted as $\mathbf{M}(n, m, \mathbb{C})$.
For a matrix $A \in \mathbf{M}(n, m, \mathbb{C})$, $\overline{A} \in \mathbf{M}(n, m, \mathbb{C})$ expresses the matrix whose entries are complex conjugate of the original matrix $A$, and $A^\ast \in \mathbf{M}(m, n, \mathbb{C})$ represents its adjoint matrix.
For any rectangular matrices $A = (a_{ij})$ and $ B$,  define
\begin{equation*}
A \oplus B
:= \mqty(A & \mathbf{0} \\ \mathbf{0} &B), \quad
A \otimes B
:= \mqty(
a_{11}B & a_{12}B & \cdots & a_{1n}B\\
a_{21}B & a_{22}B & \cdots & a_{2n}B\\
\vdots &  & \ddots & \vdots\\
a_{n1}B & a_{n2}B &  & a_{nn}B
).
\end{equation*}
Let $\mathbb{N}_0 = \{0, 1, \ldots \}$.
Let $\mathsf{U}(d)$ be the $d$-dimensional unitary group. 
Let $\textrm{C-}U$ be the controlled unitary operation, whose picture is expressed by the black bullet $\bullet$ on the controlled qubit throughout this paper,
 whereas the white bullet $\circ$ corresponds to $X\otimes I \cdot (\textrm{C-}U)\cdot X\otimes I$  that applies $U$ only when the controlled qubit is $|0\rangle$. (See~\cref{Fig_circuit_whole} for example.)
Let $H, X$, $S$ 
and Toffoli gate be the gates defined in the standard manner. See e.g.,~\cite{CN10}.

For a function $f: \Ud \to \mathbb{C}$, define the supremum norm
\begin{equation*}
\SUPN{f}
:= \sup_{g \in \Ud} |f(g)|
\end{equation*}
and an inner product
\begin{equation*}
\langle f, h\rangle 
:= \int_G \overline{f(g)} h(g) dg
\end{equation*}

for $f, h \in L^2(\Ud)$, where $dg$ is the normalized Haar measure on $\Ud$. This inner product induces the $L^2$ norm:
\begin{equation}\label{eq_Pre_inner-product}
\LTN{f}
:= \sqrt{\langle f, f\rangle }
\end{equation}

\subsection{Prerequisites from representation theory}
Here some standard materials in representation theory are described, which can be found in standard textbooks such as~\cite{FH13}.

\begin{Fact}\label{Fact_label_irrep_Ud}
There is a one-to-one correspondence between the set $\widehat{\Ud}$ of all irreducible representations over $\Ud$ and the set 
\begin{equation*}
\mathbb{Z}^d_+ := \{\lambda^d = (\lambda_1, \ldots, \lambda_d) \in \mathbb{Z}^d \mid \lambda_1 \geq \cdots \geq \lambda_d\}.
\end{equation*}
\end{Fact}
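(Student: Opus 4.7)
}
The plan is to invoke the standard highest-weight classification of irreducible representations of a compact connected Lie group, specialized to $\Ud$, whose maximal torus is the subgroup $T^d \subset \Ud$ of diagonal unitaries $\mathrm{diag}(e^{i\theta_1},\ldots,e^{i\theta_d})$. Since every finite-dimensional continuous representation of $\Ud$ is completely reducible (Peter--Weyl / unitarity) and since representations of the torus $T^d$ are indexed by characters $\lambda = (\lambda_1,\ldots,\lambda_d) \in \mathbb{Z}^d$ via
\begin{equation*}
\chi_\lambda(\mathrm{diag}(e^{i\theta_1},\ldots,e^{i\theta_d})) = e^{i\sum_j \lambda_j \theta_j},
\end{equation*}
any irreducible $\pi$ of $\Ud$ restricts to $T^d$ as a direct sum of such weights, giving a finite multiset of weights in $\mathbb{Z}^d$.

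Next, I would introduce the Weyl group $W = \mathfrak{S}_d$ acting on $\mathbb{Z}^d$ by permutation of coordinates, coming from conjugation of $T^d$ by permutation matrices inside $\Ud$. This action preserves the weight multiset, so in every $W$-orbit there is a unique dominant representative lying in the closed chamber $\mathbb{Z}^d_+ = \{\lambda_1 \geq \cdots \geq \lambda_d\}$. One then defines the highest weight of $\pi$ as the dominant weight $\lambda(\pi) \in \mathbb{Z}^d_+$ that is maximal with respect to the partial order (dominance ordering) induced by the positive roots $\alpha_{ij} = e_i - e_j$ for $i<j$. The existence of a unique maximal weight appearing with multiplicity one in any irreducible $\pi$ is the standard highest-weight theorem; uniqueness of $\pi$ (up to isomorphism) given $\lambda(\pi)$ follows by a Schur-lemma argument applied to the one-dimensional highest-weight line.

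For the surjectivity direction — that every $\lambda \in \mathbb{Z}^d_+$ arises as $\lambda(\pi)$ for some irreducible $\pi$ — I would give an explicit construction. For nonnegative signatures $\lambda_d \geq 0$ one realizes $\pi_\lambda$ inside $(\mathbb{C}^d)^{\otimes |\lambda|}$ via Schur--Weyl duality: the Schur functor $\mathbb{S}_\lambda(\mathbb{C}^d)$ obtained by applying the Young symmetrizer associated to the Young diagram of $\lambda$ yields an irreducible $\Ud$-representation of highest weight $\lambda$. To cover arbitrary $\lambda \in \mathbb{Z}^d_+$ (including negative entries), tensor by a suitable integer power of the one-dimensional determinant representation $g \mapsto (\det g)^k$, which shifts the signature by $(k,k,\ldots,k)$; choosing $k$ large enough reduces the general case to the nonnegative one.

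The main obstacle — and the place where one genuinely needs Lie-theoretic input — is the highest-weight theorem itself: namely proving that in any irreducible $\pi$ there exists a unique (up to scalar) vector killed by all positive root spaces, that its weight is dominant, and that this weight determines $\pi$ up to isomorphism. Everything else (torus weight decomposition, Weyl-group symmetry, the explicit Schur-functor construction, and the determinant twist) is essentially bookkeeping. Since this is a textbook result for compact connected Lie groups, I would simply cite~\cite{FH13} for the highest-weight classification and focus the exposition on identifying the dominant chamber with $\mathbb{Z}^d_+$ under the basis of fundamental weights natural for $\Ud$.
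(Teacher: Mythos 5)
Your plan is correct: it is the standard highest-weight classification for $\Ud$ (maximal torus weights, Weyl-group/dominant chamber, Schur functors plus a determinant twist for negative signatures), with the key highest-weight theorem deferred to~\cite{FH13}. The paper treats this statement exactly the same way—as a standard fact cited to~\cite{FH13} without proof—so your approach matches; no gaps.
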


For an irreducible unitary representation $(\pi_\lambda, V_\lambda) \in \widehat{\Ud}$, let $\dim \pi_\lambda := \dim V_\lambda$, and $\pi_\lambda(g)_{i, j}~(1 \leq i, j \leq \dim \pi_\lambda)$ be the $(i, j)$ matrix entry of $\pi_\lambda(g)$.

\begin{Fact}\label{Fact_conjugate_label_Ud}
For an unitary representation $(\pi_\lambda, V_\lambda) \in \widehat{\Ud}$, its conjugate representation $(\bar{\pi}_\lambda, \bar{V}_\lambda)$ is also irreducible unitary representation and labeled by $\bar{\lambda} = (-\lambda_d, -\lambda_{d-1}, \ldots, -\lambda_1) \in \mathbb{Z}_+^d$.
\end{Fact}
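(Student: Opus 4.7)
The plan is to verify the three assertions in order: that $\bar{\pi}_\lambda$ is a well-defined unitary representation, that it is irreducible, and that under the classification of~\cref{Fact_label_irrep_Ud} its label is $\bar{\lambda}=(-\lambda_d,\ldots,-\lambda_1)$.

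The first two assertions are essentially formal. The multiplicativity $\bar{\pi}_\lambda(gh)=\overline{\pi_\lambda(g)\pi_\lambda(h)}=\bar{\pi}_\lambda(g)\bar{\pi}_\lambda(h)$ and the unitarity of each $\bar{\pi}_\lambda(g)$ follow immediately from the corresponding properties of $\pi_\lambda$, so the conjugate indeed defines a unitary representation on $\bar{V}_\lambda$. For irreducibility I would use that complex conjugation $v\mapsto\bar{v}$ is a conjugate-linear bijection $V_\lambda\to\bar{V}_\lambda$ intertwining the two actions, so that any $\bar{\pi}_\lambda$-invariant subspace $W\subset\bar{V}_\lambda$ corresponds to a $\pi_\lambda$-invariant subspace $\overline{W}\subset V_\lambda$. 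Since $\pi_\lambda$ is irreducible, $\overline{W}\in\{0,V_\lambda\}$, and hence $W\in\{0,\bar{V}_\lambda\}$.

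The main work is identifying the highest weight. Let $T\subset\Ud$ be the maximal torus of diagonal unitaries. If $v\in V_\lambda$ is a weight vector of weight $\mu=(\mu_1,\ldots,\mu_d)\in\mathbb{Z}^d$, meaning $\pi_\lambda(\mathrm{diag}(t_1,\ldots,t_d))\,v=t_1^{\mu_1}\cdots t_d^{\mu_d}\,v$, then, using $\bar{t}_j=t_j^{-1}$ because $|t_j|=1$, the corresponding vector $\bar{v}\in\bar{V}_\lambda$ satisfies $\bar{\pi}_\lambda(\mathrm{diag}(t_1,\ldots,t_d))\,\bar{v}=t_1^{-\mu_1}\cdots t_d^{-\mu_d}\,\bar{v}$, so $\bar{v}$ has weight $-\mu$. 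Thus the weights of $\bar{\pi}_\lambda$ are exactly the negatives of those of $\pi_\lambda$. The extremal weights of $\pi_\lambda$ form the Weyl-group orbit of $\lambda$, i.e.\ all permutations of $(\lambda_1,\ldots,\lambda_d)$, so the extremal weights of $\bar{\pi}_\lambda$ are the permutations of $(-\lambda_1,\ldots,-\lambda_d)$. Using $\lambda_1\geq\cdots\geq\lambda_d$, the unique representative lying in $\mathbb{Z}_+^d$ (weakly decreasing) is $(-\lambda_d,-\lambda_{d-1},\ldots,-\lambda_1)$. The only non-formal step—and the one I would expect to be the main obstacle—is concluding from this that $\bar\lambda$ is actually the highest weight in the sense used by the labeling in~\cref{Fact_label_irrep_Ud}; this is a standard consequence of the highest-weight classification, since among the weights of an irrep the unique one dominating all others is precisely the one in $\mathbb{Z}_+^d$ lying in the Weyl orbit of the extremal weight.
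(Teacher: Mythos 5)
The paper does not actually prove this statement: it is listed among the ``prerequisites from representation theory'' and cited as standard material (Fulton--Harris), so there is no in-paper argument to compare yours against. Your proof is correct and is the standard one: entrywise conjugation is a conjugate-linear bijection intertwining $\pi_\lambda$ and $\bar{\pi}_\lambda$, which transfers unitarity and irreducibility, and on the maximal torus conjugation negates weights, so the label of $\bar{\pi}_\lambda$ is the dominant representative of the Weyl orbit of $-\lambda$, namely $(-\lambda_d,\ldots,-\lambda_1)$. If you want to shorten the final step and avoid invoking extremal weights, observe directly that the lowest weight of $\pi_\lambda$ is $w_0\lambda=(\lambda_d,\ldots,\lambda_1)$ (with $w_0$ the order-reversing permutation), and that negating the set of weights turns the weight dominated by all others into the weight dominating all others; this identifies the highest weight of $\bar{\pi}_\lambda$ as $-w_0\lambda$ without any case analysis.
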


\subsubsection{Peter--Weyl theorem and Schur--Weyl duality, and their consequences}
By the Peter--Weyl theorem~\cite{PW27}, the set of squared integrable functions over $\Ud$, denoted by $L^2(\Ud)$, is decomposed as

\begin{equation*}
L^2(\Ud)
= \widehat{\bigoplus}_{\lambda \in \mathbb{Z}_d^+} M_{\pi_\lambda}
\end{equation*}
(as the $\Ud$ representation,)
where $M_{\pi_\lambda}$ is the space spanned by the corresponding matrix components: 
\begin{equation*}
M_{\pi_\lambda} := \textrm{span}_{\mathbb{C}}\{\pi_\lambda(g)_{i, j} \mid 1 \leq i, j \leq \dim \pi_\lambda \},
\end{equation*}
and $\widehat{\bigoplus}$ represents a direct sum as Hilbert space. 
Each $\pi_\lambda(g)_{i,j}$ is orthogonal to others w.r.t. the inner product defined in~Equation~\eqref{eq_Pre_inner-product}. 
\begin{Fact}[Parseval type identity]\label{Fact_Parseval}
For any function $f \in L^2(\Ud)$, its $L^2$ norm satisfies
\begin{equation*}
\LTN{f}^2
= \sum_{\pi_\lambda \in \widehat{\Ud}} \LTN{P_\lambda f}^2
\end{equation*}
where $P_\lambda: L^2(\Ud) \to L^2(\Ud)$ is the orthogonal projection onto the subspace $M_{\pi_\lambda}$.
\end{Fact}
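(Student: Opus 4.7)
The statement is a direct corollary of the Peter--Weyl decomposition already quoted just above it, so the plan is essentially to repackage that decomposition as a Hilbert-space Pythagorean identity. My strategy is to (i) record that the family of subspaces $\{M_{\pi_\lambda}\}_{\lambda \in \mathbb{Z}_+^d}$ is pairwise orthogonal in $L^2(\Ud)$, (ii) conclude that the projections $P_\lambda f$ are pairwise orthogonal and sum (in $L^2$) to $f$, and (iii) invoke continuity of the inner product to exchange the norm squared with the sum over $\lambda$.

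In more detail, I would proceed as follows. First, I would note that the Peter--Weyl theorem, as quoted, asserts a Hilbert space direct sum $L^2(\Ud) = \widehat{\bigoplus}_{\lambda} M_{\pi_\lambda}$, and further tells us that distinct matrix coefficients $\pi_\lambda(g)_{i,j}$ (across all admissible $\lambda,i,j$) are mutually orthogonal with respect to the inner product~\eqref{eq_Pre_inner-product}. In particular, for $\lambda \neq \mu$ and any $u \in M_{\pi_\lambda}$, $v \in M_{\pi_\mu}$, expanding $u,v$ as finite complex linear combinations of matrix coefficients and using linearity of $\langle \cdot,\cdot\rangle$ gives $\langle u, v\rangle = 0$. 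Hence the subspaces $M_{\pi_\lambda}$ are pairwise orthogonal closed subspaces of $L^2(\Ud)$, and the orthogonal projection $P_\lambda$ onto $M_{\pi_\lambda}$ is well defined.

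Next, given $f \in L^2(\Ud)$, the Hilbert-space direct sum structure means that $f = \sum_{\lambda} P_\lambda f$, with the sum converging in $L^2$-norm (this is exactly what $\widehat{\bigoplus}$ encodes). Using orthogonality of distinct summands, for any finite subset $\Lambda \subset \mathbb{Z}_+^d$ the partial sum satisfies
\begin{equation*}
\left\|\sum_{\lambda \in \Lambda} P_\lambda f\right\|_{L^2}^2
= \sum_{\lambda \in \Lambda} \|P_\lambda f\|_{L^2}^2
\end{equation*}
by the finite-dimensional Pythagorean identity. Letting $\Lambda$ exhaust $\mathbb{Z}_+^d$ and using continuity of the norm on $L^2(\Ud)$, the left-hand side converges to $\|f\|_{L^2}^2$, which yields the claim.

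There is essentially no obstacle here beyond unpacking standard Hilbert-space facts; the nontrivial representation-theoretic input (Schur orthogonality of matrix coefficients, completeness of their span in $L^2$) has already been absorbed into the statement of the Peter--Weyl theorem as cited. The only point worth handling carefully is that $\widehat{\Ud}$ is countably infinite, so one must justify the passage from finite partial sums to the full series, which is immediate from the definition of a Hilbert-space direct sum and continuity of $\|\cdot\|_{L^2}$.
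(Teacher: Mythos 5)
Your proof is correct and matches the paper's treatment: the paper states this Fact without proof as an immediate consequence of the Peter--Weyl decomposition and the orthogonality of the matrix coefficients $\pi_\lambda(g)_{i,j}$ quoted just above it, which is precisely the Pythagorean/limiting argument you spell out. Nothing is missing; your care about passing from finite partial sums to the full series over the countable set $\mathbb{Z}_+^d$ is the only point requiring attention, and you handle it correctly.
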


To state following facts in a concise manner, define
\begin{equation*}
\Lambda_{m, d}
:= \left\{\lambda \in \mathbb{Z}_+^d \mid \sum_{i \leq d} \lambda_i = m, \lambda_d \geq 0\right\}
\text{~and~}
\overline{\Lambda}_{m, d} 
:= \left\{\lambda \in \mathbb{Z}_+^d \mid \sum_{i \leq d} \lambda_i = -m, \lambda_1 \leq 0\right\}
\end{equation*}
for $d, m \geq 1$.

The following fact is one version of Schur--Weyl duality~\cite[Section~5.19]{EGH+11}.

\begin{Fact}\label{Fact_Schur-Weyl}
There exists a unitary operator $U_\mathsf{Sch} \in \mathbf{M}(d^m, d^m, \mathbb{C})$ such that 
\begin{equation*}
\forall g \in \Ud, ~~ U_\mathsf{Sch} g^{\otimes m} U^*_\mathsf{Sch} = \bigoplus_{\substack{\lambda \in \Lambda_{m,  d}}} I_\lambda \otimes \pi_\lambda(g) \otimes I_{m_{\lambda}}
\end{equation*}
where $m_\lambda$'s are positive integers.
\end{Fact}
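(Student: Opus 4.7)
The statement is the classical Schur--Weyl duality, and the plan is to derive it via the double commutant theorem applied to two commuting actions on $V^{\otimes m}$, where $V = \mathbb{C}^d$. First I would observe that $\Ud$ acts on $V^{\otimes m}$ by $g \mapsto g^{\otimes m}$ while the symmetric group $\mathfrak{S}_m$ acts by permuting the $m$ tensor factors; both actions are unitary with respect to the canonical inner product on $V^{\otimes m}$ and they visibly commute as operators.

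Next I would invoke the double commutant theorem, using semisimplicity of the group algebra $\mathbb{C}[\mathfrak{S}_m]$ (Maschke) together with complete reducibility of the image of $\mathbb{C}[\Ud]$ acting on the finite-dimensional space $V^{\otimes m}$. Schur--Weyl duality then asserts that the images of $\mathbb{C}[\Ud]$ and $\mathbb{C}[\mathfrak{S}_m]$ in $\mathrm{End}(V^{\otimes m})$ are mutual commutants, yielding a $\Ud \times \mathfrak{S}_m$-equivariant isometry
\begin{equation*}
V^{\otimes m} \;\cong\; \bigoplus_{\lambda} V_\lambda \otimes W_\lambda ,
\end{equation*}
where $V_\lambda$ runs over certain irreducible $\Ud$-representations and $W_\lambda$ is the corresponding irreducible $\mathfrak{S}_m$-representation.

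Then I would identify which $\lambda$ actually contribute. Using Young symmetrizers one shows that $V_\lambda$ is nonzero exactly when the associated Young diagram has at most $d$ rows, i.e.\ when $\lambda \in \Lambda_{m,d}$ after padding with zeros. Setting $m_\lambda := \dim W_\lambda \geq 1$, choosing orthonormal bases adapted to each summand, and collecting them into a single change-of-basis unitary $U_\mathsf{Sch}$, equivariance then forces conjugation by $U_\mathsf{Sch}$ to carry $g^{\otimes m}$ into the claimed block-diagonal form $\bigoplus_\lambda \pi_\lambda(g) \otimes I_{m_\lambda}$ (the extra identity factor $I_\lambda$ in the statement is the identity on the one-dimensional label space and can be absorbed into the notation).

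The main obstacle is the double commutant statement itself: one must argue that every operator on $V^{\otimes m}$ commuting with all permutations of tensor factors already lies in the linear span of $\{g^{\otimes m} : g \in \Ud\}$. By polarization this reduces to showing that the symmetric subalgebra of $\mathrm{End}(V)^{\otimes m}$ is spanned by pure tensor powers, which is the nontrivial classical input. The remaining ingredients (semisimplicity, labeling of symmetric group irreducibles by partitions of $m$, and nonvanishing of the Weyl modules of depth at most $d$) are standard and can be found in, e.g.,~\cite{FH13, EGH+11}.
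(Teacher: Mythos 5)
The paper does not prove this statement: it is labeled a \emph{Fact} and cited directly from the textbook reference (Section~5.19 of the Etingof et al.\ notes), so there is no in-paper argument to compare against. Your double-commutant proof is the standard and correct route, and your identification of the genuinely nontrivial input --- that the commutant of the $\mathfrak{S}_m$-action on $V^{\otimes m}$ is spanned by pure tensor powers --- is the right place to focus. One small point deserves an explicit sentence: polarization shows that $\mathrm{Sym}^m(\mathrm{End}(V))$ is spanned by $\{A^{\otimes m} : A \in \mathrm{End}(V)\}$, but the Fact concerns the span of $\{g^{\otimes m} : g \in \Ud\}$, so you still need that restricting to unitary $g$ loses nothing. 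This follows, e.g., because $\Ud$ is Zariski dense in $\mathsf{GL}(d,\mathbb{C})$ (the unitarian trick), or because every matrix is a linear combination of unitaries so one can polarize within the unitary group; either way it should be stated. Your reading of the slightly odd $I_\lambda \otimes \pi_\lambda(g) \otimes I_{m_\lambda}$ notation (absorbing $I_\lambda$ as a trivial label factor, with $m_\lambda = \dim W_\lambda \geq 1$ the multiplicity) is consistent with how the paper uses the Fact in \cref{Lem_poly_decomp}, where the multiplicity is written as $\dim \pi^{\Sm}_\lambda$.
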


\cref{Fact_Schur-Weyl}, known as Mixed Schur--Weyl duality~\cite{BCH+94, Ngu23, Gri25}, also play an important role for our purpose.
\begin{Fact}[Mixed Schur--Weyl duality]\label{Fact_Mixed-Schur-Weyl}
There exists a unitary operator $W_\mathsf{Sch}(m, \bar{m}) \in \mathbf{M}(d^{(m + \bar{m})}, d^{(m+ \bar{m})}, \mathbb{C})$ such that 
\begin{equation*}
\forall g \in \Ud, ~~ W_\mathsf{Sch}(m, \bar{m}) g^{\otimes m} \otimes \bar{g}^{\otimes \bar{m}} W^*_\mathsf{Sch}(m, \bar{m}) = \bigoplus_{\substack{\lambda \in \Lambda_d(m, \bar{m})}} \pi_\lambda(g) \otimes I_{m_\lambda}
\end{equation*}
for some positive $m_\lambda$'s, 
where 
\begin{equation*}
\Lambda_{d}(m, \bar{m})
= \left\{(\lambda_+, \lambda_-) \in \mathbb{Z}_+^d \mid 0 \leq d_m, d_{\bar{m}} \leq d,\quad 0 \leq k \leq \min\{m, \bar{m}\}, \quad \lambda_+ \in \Lambda_{m-k,d_m}, \quad \lambda_- \in \overline{\Lambda}_{\bar{m}-k,d_{\bar{m}}}\right\}.
\end{equation*}
Note that some zeros might be padded intermediately so that any element belongs to $\mathbb{Z}_+^d$; $(\lambda_+, \lambda_-) := (\lambda_+, 0, \ldots, 0, \lambda_-)$.
\end{Fact}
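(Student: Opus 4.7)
The plan is to exhibit the claimed decomposition by first invoking compactness to guarantee the existence of \emph{some} unitary $W_\mathsf{Sch}(m,\bar{m})$ block-diagonalising the representation, and then identifying which irreducibles actually occur. Since $g \mapsto g^{\otimes m} \otimes \bar{g}^{\otimes \bar{m}}$ is a continuous finite-dimensional unitary representation of the compact group $\Ud$, complete reducibility yields a decomposition into irreducibles; equivalently, there exists a unitary $W_\mathsf{Sch}(m,\bar{m})$ conjugating the representation into block-diagonal form $\bigoplus_\lambda \pi_\lambda(g) \otimes I_{m_\lambda}$. The entire content of the theorem then reduces to identifying the set of $\lambda \in \mathbb{Z}_+^d$ with $m_\lambda \geq 1$, which is claimed to be $\Lambda_d(m, \bar{m})$.

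To identify this set, I would first apply the ordinary Schur--Weyl duality (\cref{Fact_Schur-Weyl}) to decompose $g^{\otimes m}$ into irreducibles indexed by $\mu \in \Lambda_{m,d}$, and use \cref{Fact_conjugate_label_Ud} to decompose $\bar{g}^{\otimes \bar{m}}$ into irreducibles indexed by $\bar{\nu}$ for $\nu \in \Lambda_{\bar{m}, d}$. The mixed tensor representation then becomes a direct sum over $(\mu,\nu)$ of tensor products $\pi_\mu(g) \otimes \pi_{\bar{\nu}}(g)$, each of which is a tensor product of a covariant irreducible with a contravariant one. Each such product decomposes further via the Newell--Littlewood (equivalently, mixed Littlewood--Richardson) rule: the irreducibles appearing are parametrized by a non-negative integer $k$ (counting the number of ``contractions'' between the two factors) together with partitions $\lambda_+$ and $\lambda_-$ obtained from $\mu$ and $\nu$ by removing $k$ boxes each. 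Padding zeros in between $\lambda_+$ and $\lambda_-$ embeds every such pair into $\mathbb{Z}_+^d$ precisely as in the displayed definition of $\Lambda_d(m,\bar{m})$, and collecting the contributions across all $(\mu,\nu,k)$ produces the positive multiplicities $m_\lambda$.

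The main obstacle is verifying that $\Lambda_d(m,\bar{m})$ is neither too large nor too small. For the upper inclusion, any irreducible appearing in the decomposition must have total weight $m - \bar{m}$ under the central $\mathsf{U}(1) \subset \Ud$, and its highest weight must admit a splitting into a ``positive part'' $\lambda_+$ of size $m-k$ using at most $d_m$ rows and a ``negative part'' $\lambda_-$ of size $-(\bar{m}-k)$ using at most $d_{\bar{m}}$ rows for some common $k$, with $d_m + d_{\bar{m}} \leq d$ to allow the zero padding. For the lower inclusion, one checks that every $(\lambda_+, \lambda_-)$ meeting these constraints arises with nonzero multiplicity from at least one pair $(\mu, \nu)$ via the Newell--Littlewood coefficients, forcing $m_\lambda \geq 1$. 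Both directions can be established either by a direct character computation on the maximal torus of $\Ud$ via Weyl's character formula (matching the character of $g^{\otimes m} \otimes \bar{g}^{\otimes \bar{m}}$ against the sum of $\chi_{\pi_\lambda}$ over $\Lambda_d(m,\bar{m})$), or by invoking the walled Brauer algebra formulation of mixed Schur--Weyl duality in~\cite{BCH+94, Ngu23, Gri25} and translating their indexing conventions into the form used here. The essentially combinatorial verification of the index set is the main technical step; the operator-theoretic part reduces to standard complete reducibility.
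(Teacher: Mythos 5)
The paper does not actually prove this statement: it is imported as a \textbf{Fact}, with the citations~\cite{BCH+94, Ngu23, Gri25} standing in for the proof, so there is no internal argument for you to diverge from. Your outline is a reasonable reconstruction of how such a proof would go, and its first half is unproblematic: complete reducibility of the continuous finite-dimensional representation $g \mapsto g^{\otimes m}\otimes \bar g^{\otimes \bar m}$ of the compact group $\Ud$ immediately gives a unitary $W_\mathsf{Sch}(m,\bar m)$ and a block decomposition $\bigoplus_\lambda \pi_\lambda(g)\otimes I_{m_\lambda}$, and applying \cref{Fact_Schur-Weyl} together with \cref{Fact_conjugate_label_Ud} to the two factors correctly reduces everything to decomposing the products $\pi_\mu(g)\otimes\pi_{\bar\nu}(g)$ with $\mu\in\Lambda_{m,d}$, $\nu\in\Lambda_{\bar m,d}$.

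The gap is that the entire content of the Fact --- that the labels with $m_\lambda\geq 1$ are exactly $\Lambda_d(m,\bar m)$ --- is asserted rather than established. You invoke ``the Newell--Littlewood (equivalently, mixed Littlewood--Richardson) rule''; the Newell--Littlewood numbers govern stable tensor products for the orthogonal and symplectic groups, whereas the rule you need is the mixed covariant--contravariant rule for $\mathsf{U}(d)$ (King/Koike), in which the multiplicity of the label $(\lambda_+,\lambda_-)$ in $\pi_\mu\otimes\pi_{\bar\nu}$ is $\sum_{\kappa} c^{\mu}_{\kappa\lambda_+}c^{\nu}_{\kappa\lambda_-}$ \emph{in the stable range} $\ell(\mu)+\ell(\nu)\leq d$. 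The delicate point, which your sketch passes over, is the non-stable range (when $m+\bar m$ is large compared to $d$): there the naive ``remove $k$ boxes from each factor'' count must be corrected by modification rules, and one has to check both that every pair with $\ell(\lambda_+)+\ell(\lambda_-)\leq d$ still survives with positive multiplicity and that nothing outside $\Lambda_d(m,\bar m)$ appears. Your proposed remedies (a Weyl character computation on the maximal torus, or translating the walled Brauer algebra results of~\cite{BCH+94, Ngu23, Gri25}) would indeed settle this, but neither is carried out, so as written the proposal establishes only the block-diagonal form and defers the identification of the index set to the same literature the paper already cites --- which is acceptable for a Fact, but should then be stated as a citation rather than presented as a proof.
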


\subsection{Algorithms and Their Complexity}

As we deal with lower bounds on sample complexities, we next rigorously define the computational model considered throughout this paper in the below. 
\begin{Def}\label{Def_gene_q}
An $m$-generalized query algorithm $\mathcal{A}$ consists of a sequence of unitaries 

\begin{equation*}
U_0, V_1(g), U_1, V_2(g), U_2,\ldots, V_m(g), U_m
\end{equation*}
 where $V_i(g)$ is a unitary whose elements are polynomials in the elements of $g$ or $\bar{g}$ with degree at most one.
This algorithm has the initial register $|0\rangle^{\otimes K}$ where $K$ is a large constant, and finally performs the measurement on the state
\begin{equation*}
U_m V_m(g) \cdots U_2 V_2(g) U_1 V_1(g) U_0|0\rangle^{\otimes K}.
\end{equation*}
Without loss of generality, we assume the measurement is performed by the computational basis.
For our purpose, the algorithm for estimating a function $f: \Ud \to \mathbb{C}$ is additionally equipped with an estimator $\hat{f}: \Bset^K \to {\Range}(f)$. Based on the outcome $s \in \Bset^K$ of the final measurement, the algorithm outputs $\hat{f}(s)$ as an estimate.
\end{Def}

\begin{Remark}
This model is a bit more powerful than the ordinary computational models since
 we can take $g, g^\ast, \textrm{C-}g, \textrm{C-}g^\ast$, for example, as $V_i(g)$'s. This means our lower bounds hold in a wider class of computational models. Note that in our upper bounds we only use the ordinary computational model, which have access only to the controlled-$g$ operation, and therefore does not rely on any additional power possibly derived from~\cref{Def_gene_q}.
\end{Remark}

\begin{Def}
Let $f: \Ud \to \mathbb{C}$.
We say an $m$-generalized query algorithm $\mathcal{A}$ estimates the function $f$ correctly with the averaged bias over the group $G =\Ud$ less than $\varepsilon$ if and only if

\begin{equation*}
\Bias_G(\mathcal{A}, f) := \E_{G}\left[ \left|\E_S[\hat{f}(s)|g] - f(g)\right|^2\right] < \varepsilon
\end{equation*}
where $\E_S[\hat{f}(s)|g]:= \sum_{s \in S} |\langle s|\mathcal{A}(g)|0\rangle^{\otimes K}|^2 \hat{f}(s)$.
\end{Def}

\begin{Def}
Let $f: \Ud \to \mathbb{C}$.
We say an $m$-generalized query algorithm $\mathcal{A}$ estimates the function $f$ correctly with the precision parameters $(\varepsilon, \delta)$ iff

\begin{equation*}
\forall g \in \Ud, \quad \Pr_S\left(|f(g) - \hat{f}(S)| > \varepsilon \right) < \delta.
\end{equation*}
\end{Def}

\begin{Def}
Let $B_{\varepsilon}(f)$ be the minimum number of generalized queries required to estimate $f$ correctly with the averaged bias (over the group $\Ud$) less  than $\varepsilon$.
\end{Def}

\begin{Def}
Let $Q_{\varepsilon,\delta}(f)$ be the minimum number of generalized queries required to estimate $f$ correctly with the precision parameters $(\varepsilon, \delta)$.
\end{Def}

These two complexities are related as follows.

\begin{Fact}\label{Fact_B_Q}
For any function $f \in L^2(\Ud)$, $B_{\varepsilon'}(f) \leq Q_{\varepsilon, \delta}(f)$ where $\varepsilon' := (2 \delta \cdot \|f \|_\Sup  + \varepsilon)^2$.
\end{Fact}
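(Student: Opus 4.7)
The plan is to take any algorithm $\mathcal{A}$ achieving the PAC guarantee with parameters $(\varepsilon,\delta)$ using $m=Q_{\varepsilon,\delta}(f)$ generalized queries, and show directly that the same algorithm (with the same estimator $\hat f$) also satisfies $\Bias_G(\mathcal A,f)\leq \varepsilon'$. Because the number of queries is identical, this immediately yields $B_{\varepsilon'}(f)\leq Q_{\varepsilon,\delta}(f)$.

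The key observation is that, by definition, the estimator $\hat f$ takes values in $\Range(f)$, so $|\hat f(s)|\leq \SUPN{f}$ for every $s\in\Bset^K$, and of course $|f(g)|\leq \SUPN{f}$. Hence the pointwise bound
\begin{equation*}
|\hat f(s)-f(g)|\leq 2\SUPN{f}
\end{equation*}
holds uniformly. First I would pull the expectation inside the absolute value using Jensen's inequality to get $|\E_S[\hat f(s)\mid g]-f(g)|\leq \E_S[|\hat f(s)-f(g)|\mid g]$. Then I would split the $S$-expectation using the event $E_g=\{s:|\hat f(s)-f(g)|>\varepsilon\}$, on which the PAC guarantee gives $\Pr_S(E_g)<\delta$, and its complement on which the deviation is at most $\varepsilon$:
\begin{equation*}
\E_S\!\left[|\hat f(s)-f(g)|\mid g\right]
\leq \varepsilon\cdot\Pr_S(E_g^c)+2\SUPN{f}\cdot\Pr_S(E_g)
\leq \varepsilon + 2\delta\,\SUPN{f}.
\end{equation*}

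Squaring this pointwise bound yields $|\E_S[\hat f(s)\mid g]-f(g)|^2\leq(2\delta\SUPN{f}+\varepsilon)^2=\varepsilon'$ for every $g\in \Ud$, and taking the Haar expectation over $G$ gives $\Bias_G(\mathcal A,f)\leq\varepsilon'$, completing the argument. (If the PAC strictness matters, the same chain with $\Pr_S(E_g)<\delta$ yields a strict inequality $\Bias_G(\mathcal A,f)<\varepsilon'$.)

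There is no real obstacle here; the only place one must be a little careful is making sure the uniform bound $|\hat f(s)|\leq \SUPN{f}$ is actually available, which is precisely the ``mild condition'' $\SUPN{f}<\infty$ mentioned in the remarks: if $f$ is unbounded then $\varepsilon'$ is trivially infinite and the statement is vacuous, whereas if $f$ is bounded the estimator's codomain $\Range(f)$ supplies exactly the quantitative ingredient needed to turn the small failure probability $\delta$ into a small contribution to the bias.
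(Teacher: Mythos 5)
Your proposal is correct and follows essentially the same route as the paper's proof: split the outcome space according to whether $|\hat f(s)-f(g)|$ exceeds $\varepsilon$, bound the bad event's contribution by $2\delta\SUPN{f}$ using the PAC guarantee and the fact that $\hat f$ takes values in $\Range(f)$, then square and average over the Haar measure. You are merely more explicit than the paper about the Jensen/triangle-inequality step and about where the uniform bound $|\hat f(s)-f(g)|\leq 2\SUPN{f}$ comes from.
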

\begin{proof}
For any generalized query algorithm $\mathcal{A}$ that has an estimator $\hat{f}$ for the function $f$,

\begin{align*}
\left|\E[\hat{f}(s)| g]  - f(g)\right|
&\leq \sum_{s: |\hat{f}(s) - f(g)| > \varepsilon } \Pr(s|g) |\hat{f}(s) - f(g)| \\
&+ \sum_{s: |\hat{f}(s) - f(g) \leq  \varepsilon } \Pr(s|g) |\hat{f}(s) - f(g)|
\end{align*}
holds for any $g$. Therefore we have
\begin{align*}
\left|\E[\hat{f}(s)| g]  - f(g)\right|
&\leq \sum_{s: |\hat{f}(s) - f(g)| > \varepsilon } \Pr(s|g) |\hat{f}(s) - f(g)| \\
&+ \sum_{s: |\hat{f}(s) - f(g)| \leq \varepsilon } \Pr(s|g) |\hat{f}(s) - f(g)|\\
&\leq 2\delta\|f\|_\Sup + \varepsilon
\end{align*}
when the algorithm $\mathcal{A}$ satisfies the PAC-learning condition: $\Pr(|\hat{f}(s) - f(g)| > \varepsilon) < \delta$ for any $g$.
\end{proof}

\section{Lower bound}\label{sec_Lower}
First, define the subspace of $L^2(\Ud)$ as follows.
\begin{equation}\label{eq_def_Q}
\mathcal{Q}_{ \leq m}(\Ud) := 
{\mathrm{span}}_\mathbb{C}\left\{g^{x_{00}}_{00}\bar{g}^{y_{00}}_{00}g^{x_{01}}_{01}\bar{g}^{y_{01}}_{01}\cdots g^{x_{dd}}_{dd}\bar{g}^{y_{dd}}_{dd} \mid \sum_{1 \leq i, j \leq d}x_{ij} + y_{ij} \leq m, ~~ (x_{ij}, y_{ij}) \in \mathbb{N}_0^2\right\}
\end{equation}
where $g_{ij}$'s are the matrix elements of a unitary $g \in \Ud$.
In other words, the space $\mathcal{Q}_{ \leq m}(\Ud)$ is the set of all polynomials with degree at most $m$.

The space $\mathcal{Q}_{\leq m}(\Ud)$ is characterized as follows.

\begin{Prop}\label{Prop_poly_m_projection}
\begin{equation}\label{eq_Prop_poly}
\mathcal{Q}_{\leq m}(\Ud)
= \bigoplus_{\substack{\lambda \in \Lambda_{d}(n, \bar{n})\\ 0 \leq n, \bar{n} \leq m}} M_{\pi_\lambda}.
\end{equation}
\end{Prop}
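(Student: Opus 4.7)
The plan is to leverage Mixed Schur--Weyl duality (\cref{Fact_Mixed-Schur-Weyl}) to identify monomials in $g_{ij},\bar g_{ij}$ with matrix entries of tensor powers of the defining representation and its conjugate, and then to use the orthogonality of matrix coefficients from the Peter--Weyl theorem to promote the resulting equality of linear spans to an internal direct sum in $L^2(\Ud)$.

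For the inclusion $\mathcal{Q}_{\leq m}(\Ud)\subseteq \bigoplus_\lambda M_{\pi_\lambda}$, I would begin by observing that any spanning monomial $g_{i_1 j_1}\cdots g_{i_n j_n}\bar g_{k_1 l_1}\cdots \bar g_{k_{\bar n} l_{\bar n}}$ of $\mathcal{Q}_{\leq m}(\Ud)$ is itself a matrix entry of $g^{\otimes n}\otimes\bar g^{\otimes \bar n}$ for the multi-index obtained by concatenating $(i_1,\dots,i_n,k_1,\dots,k_{\bar n})$ with $(j_1,\dots,j_n,l_1,\dots,l_{\bar n})$. It therefore suffices to show each entry of $g^{\otimes n}\otimes\bar g^{\otimes \bar n}$ lies in the RHS. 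Conjugating by the fixed unitary $W_{\mathsf{Sch}}(n,\bar n)$ (independent of $g$) supplied by \cref{Fact_Mixed-Schur-Weyl}, this tensor product becomes $\bigoplus_{\lambda\in\Lambda_d(n,\bar n)}\pi_\lambda(g)\otimes I_{m_\lambda}$. Consequently every entry of $g^{\otimes n}\otimes\bar g^{\otimes \bar n}$ is a finite $\mathbb{C}$-linear combination (with coefficients read off from the entries of $W_{\mathsf{Sch}}(n,\bar n)$ and its adjoint) of matrix elements $\pi_\lambda(g)_{r,s}$ with $\lambda\in\Lambda_d(n,\bar n)$, hence lies in $\bigoplus_{\lambda\in\Lambda_d(n,\bar n)} M_{\pi_\lambda}$.

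The reverse inclusion follows by running the same unitary change of basis backwards: each $\pi_\lambda(g)_{r,s}$ with $\lambda\in\Lambda_d(n,\bar n)$ is expressible as a linear combination of entries of $g^{\otimes n}\otimes\bar g^{\otimes \bar n}$, which are monomials of total degree $n+\bar n$ in $g_{ij},\bar g_{ij}$ and therefore belong to $\mathcal{Q}_{\leq m}(\Ud)$ whenever $n,\bar n$ sit in the range allowed on the RHS. Finally, the sum on the RHS is genuinely a direct sum inside $L^2(\Ud)$ because, by Peter--Weyl, the subspaces $M_{\pi_\lambda}$ attached to pairwise non-isomorphic irreducibles are pairwise orthogonal with respect to the inner product~\eqref{eq_Pre_inner-product}.

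The main obstacle I expect is the index bookkeeping. I must match the $(n,\bar n)$ parametrization on the LHS (read off from the degree profile of each monomial) with the parametrization on the RHS (where the label set $\Lambda_d(n,\bar n)$ already encodes the internal cancellations indexed by $k$ in \cref{Fact_Mixed-Schur-Weyl}). A given irreducible $\pi_\lambda$ typically appears inside $g^{\otimes n}\otimes\bar g^{\otimes \bar n}$ for several different $(n,\bar n)$, so one has to check that taking the union over the prescribed range of $(n,\bar n)$ produces each $M_{\pi_\lambda}$ exactly once as a summand—a check that is clean thanks to the Peter--Weyl orthogonality invoked above, which makes overlaps in the index set harmless at the level of subspaces of $L^2(\Ud)$.
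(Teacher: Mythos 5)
Your proof follows essentially the same route as the paper's: identify each monomial with a matrix entry of $g^{\otimes n}\otimes\bar g^{\otimes \bar n}$, apply Mixed Schur--Weyl duality (\cref{Fact_Mixed-Schur-Weyl}) to pass to irreducible matrix coefficients and back, and invoke Peter--Weyl orthogonality for the direct-sum structure. Your index bookkeeping is if anything more explicit than the paper's; the only caveat is that both arguments silently identify the total-degree bound $\sum_{i,j} x_{ij}+y_{ij}\le m$ in the definition of $\mathcal{Q}_{\le m}(\Ud)$ with the independent ranges $0\le n,\bar n\le m$ on the right-hand side, a degree-convention mismatch inherited from the paper rather than introduced by you.
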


\begin{Remark}
\cref{Prop_poly_m_projection} implies that there is an orthogonal projection operator $Q_{\leq m}$ onto $\mathcal{Q}_{\leq m}(\Ud)$, that is the sum of the projections $P_\lambda$ where $\lambda \in \Lambda_{d}(n, \bar{n})~(0 \leq n, \bar{n} \leq m)$.
As the Peter--Weyl theorem tells, there is another projection operator $Q^\perp_{\leq m}$ onto the orthogonal subspace of $\mathcal{Q}_{\leq m}(\Ud)$.
By~\cref{Fact_Parseval}, these projections satisfy
\begin{equation*}
\|Q_{\leq m}f\|^2_{L^2}
+
\|Q_{\leq m}^\perp f\|^2_{L^2}
=
\|f\|^2_{L^2}
\end{equation*}
for any $f \in L^2(\Ud)$.
\end{Remark}

\begin{proof}
Applying~\cref{Fact_Mixed-Schur-Weyl} for any $0 \leq n, \bar{n} \leq m$, we observe that there exists a unitary matrix $W$ such that for any $g \in \Ud$,
\begin{equation}\label{eq_Prop_poly_proof}
\bigoplus_{0 \leq n, \bar{n} \leq m} g^{\otimes n} \otimes \bar{g}^{\otimes \bar{n}}
= W \bigoplus_{\substack{ 0 \leq n, \bar{n} \leq m\\ \lambda \in \Lambda_{d}(n, \bar{n})}} \pi_\lambda(g) \otimes I_{m_\lambda} W^\ast
\end{equation}
for some $m_\lambda$'s.
This shows that any matrix element in
$\bigoplus_{0 \leq n, \bar{n} \leq m} g^{\otimes n} \otimes \bar{g}^{\otimes \bar{n}}$ is expressed by a linear combination of $\pi_\lambda(g)_{i, j}$'s appeared in the RHS of Equation~\eqref{eq_Prop_poly_proof} and vice versa.
This shows that the RHS and LHS in~\eqref{eq_Prop_poly}
have the same elements, since every basis in $\mathcal{Q}_{\leq m}(\Ud)$ appears as a matrix element of  
$\bigoplus_{0 \leq n, \bar{n} \leq m} g^{\otimes n} \otimes \bar{g}^{\otimes \bar{n}}$. 
To show the direct sum property of the RHS, simply use the Peter--Weyl theorem. Therefore we obtain the desired statement.
\end{proof}

Let us next define the quantity $\REPf$ that plays a pivotal role in this paper.
\begin{Def}\label{Def_REPf}
For a function $f \in L^2(\Ud)$, define
\begin{equation*}
\REPf
:= \max\left\{m \in \mathbb{N} \mid \| {Q^\perp_{\leq 2m}f} \|^2_{L^2} \geq \varepsilon \right\}.
\end{equation*}
\end{Def}

\begin{Prop}\label{Prop_Lower_B}
For any function $f \in L^2(\Ud)$,  $B_{\varepsilon}(f) =  \Omega(\REPf).$
\end{Prop}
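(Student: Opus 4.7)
The plan is to reduce the lower bound to a statement about polynomial approximation of $f$ in $L^2(\Ud)$. The key structural observation is that, for an $m$-generalized query algorithm $\mathcal{A}$, the map $g \mapsto \E_S[\hat{f}(s)\mid g]$ lies in the subspace $\mathcal{Q}_{\leq 2m}(\Ud)$ defined in~\eqref{eq_def_Q}. Given that, the bias is an $L^2$ distance from $f$ to something in $\mathcal{Q}_{\leq 2m}(\Ud)$, and Pythagoras against the orthogonal decomposition from \cref{Prop_poly_m_projection} forces the orthogonal tail $\|Q^\perp_{\leq 2m} f\|_{L^2}^2$ to be small. Comparing to \cref{Def_REPf} then yields the lower bound.

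First I would unfold the circuit. The amplitude $\langle s\mid U_m V_m(g)\cdots U_1 V_1(g) U_0\mid 0\rangle^{\otimes K}$ is a polynomial of total degree at most $m$ in the variables $\{g_{ij},\bar{g}_{ij}\}$, because each $V_i(g)$ is linear in these entries and the $U_i$'s carry no $g$-dependence. Its complex conjugate is again a polynomial of degree $\leq m$ in the same variables, so the outcome probability $\Pr(s\mid g)=|\langle s\mid \mathcal{A}(g)\mid 0\rangle^{\otimes K}|^2$ is a polynomial in $\{g_{ij},\bar{g}_{ij}\}$ of total degree $\leq 2m$. Summing against the (finite collection of) estimator values $\hat{f}(s)$ then gives
\begin{equation*}
\E_S[\hat{f}(s)\mid g]=\sum_{s\in\{0,1\}^K}\Pr(s\mid g)\hat{f}(s)\ \in\ \mathcal{Q}_{\leq 2m}(\Ud).
\end{equation*}

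Next I would combine this with the bias hypothesis. By definition,
\begin{equation*}
\Bias_G(\mathcal{A},f)=\int_{\Ud}\bigl|\E_S[\hat{f}(s)\mid g]-f(g)\bigr|^2 dg=\bigl\|\E_S[\hat{f}(\cdot)\mid\cdot]-f\bigr\|_{L^2}^2<\varepsilon.
\end{equation*}
Because $Q^\perp_{\leq 2m}f$ is, by \cref{Prop_poly_m_projection} and the Peter--Weyl decomposition, the orthogonal projection of $f$ onto the complement of $\mathcal{Q}_{\leq 2m}(\Ud)$, and $\E_S[\hat{f}(\cdot)\mid\cdot]\in\mathcal{Q}_{\leq 2m}(\Ud)$, the Pythagorean identity (equivalently, \cref{Fact_Parseval}) gives
\begin{equation*}
\bigl\|\E_S[\hat{f}(\cdot)\mid\cdot]-f\bigr\|_{L^2}^2\ \geq\ \|Q^\perp_{\leq 2m}f\|_{L^2}^2.
\end{equation*}
Chaining the two inequalities yields $\|Q^\perp_{\leq 2m}f\|_{L^2}^2<\varepsilon$. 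By \cref{Def_REPf}, any $m$ satisfying this strict inequality must obey $m>\REPf$, so any $\mathcal{A}$ achieving the bias guarantee uses at least $\REPf+1$ queries, which is $\Omega(\REPf)$, completing the proof.

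I do not anticipate a major obstacle: the only subtle step is the polynomial-degree bookkeeping for the outcome probabilities, which is a direct consequence of the structural constraint on $V_i(g)$ in \cref{Def_gene_q} together with the fact that conjugation exchanges $g$ and $\bar{g}$ (so both appear on equal footing in $\mathcal{Q}_{\leq 2m}(\Ud)$). The rest is purely the orthogonality supplied by Peter--Weyl and the definition of $\REPf$.
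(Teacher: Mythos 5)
Your proposal is correct and follows essentially the same route as the paper's proof: show $\E_S[\hat{f}(s)\mid g]\in\mathcal{Q}_{\leq 2m}(\Ud)$, use the orthogonal projection inequality to deduce $\|Q^\perp_{\leq 2m}f\|_{L^2}^2<\varepsilon$ from the bias bound, and conclude $m>\REPf$ from \cref{Def_REPf}. Your write-up is in fact slightly more explicit than the paper's on two points it leaves implicit, namely the degree bookkeeping for the amplitudes and the monotonicity of $m\mapsto\|Q^\perp_{\leq 2m}f\|_{L^2}^2$ needed in the final step.
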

\begin{proof}

Let $\mathcal{A}$ be an $m$-generalized query algorithm equipped with an estimator $\hat{f}: \Bset^K \to {\Range} f$ for the function $f$ and assume $\mathcal{A}$ estimates $f$ with $\Bias_G < \varepsilon$;

\begin{equation*}
\Bias_G(\mathcal{A}, f) = \E_{G}[|\E_S[\hat{f}(s)|g] - f(g)|^2] < \varepsilon.
\end{equation*}
Since
$\E_S[\hat{f}(s)|g]:= \sum_{s \in S} |\langle s|\mathcal{A}(g)|0\rangle^{\otimes K}|^2 \hat{f}(s) \in \mathcal{Q}_{2m}(\Ud)$,
i.e., $\E_S[\hat{f}(s)|g]$ is polynomial (in the matrix elements of $\Ud$) with degree at most $2m$, we have $Q^\perp_{\leq 2m} (f - \E_S[\hat{f}(s)|g]) = Q^\perp_{\leq 2m} f$.
This implies
\begin{equation*}
\|Q^\perp_{\leq 2m} f\|^2_{L^2} = \|Q^\perp_{\leq 2m} (f - \E_S[\hat{f}(s)|g]) \|_{L^2}^2
 \leq \|f(g) - \E_S[\hat{f}(s)|g] \|_{L^2}^2 = \Bias(\mathcal{A}, f) < \varepsilon
\end{equation*}
where the first inequality follows from 
\begin{equation*}
\|Q^\perp_{\leq 2m}g\|^2_{L^2} = \|g\|^2_{L^2} - \|Q_{\leq 2m}g\|^2_{L^2} \leq \|g\|^2_{L^2}
\end{equation*}
for any $g \in L^2(\Ud)$.
\par
These arguments show that if $B_\varepsilon(f) \leq m$ then $\|Q_{\leq 2m}^\perp f\|_{L^2}^2 < \varepsilon$ holds.
Taking the contraposition of this statement shows the desired statement.
\end{proof}

\section{Upper bound}\label{sec_Upper}

\begin{figure}[hbtp]
 \centering
 \includegraphics[keepaspectratio, scale=0.45]
      {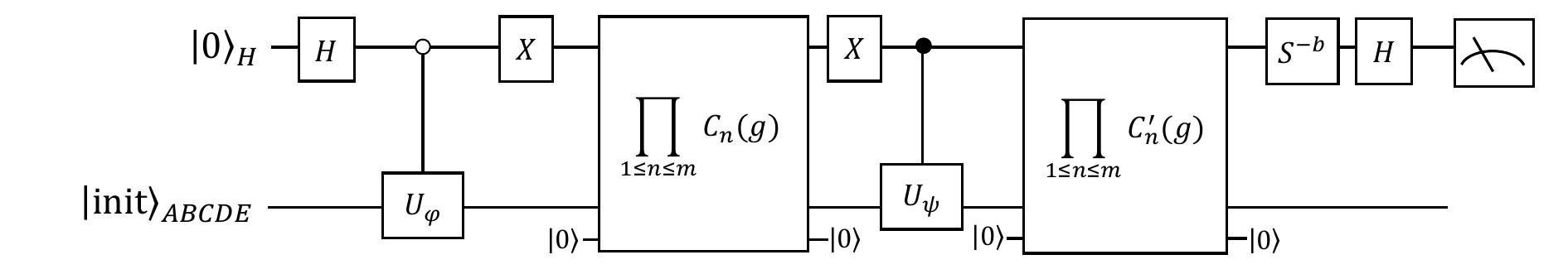}
 \caption{The definition of the algorithm \Ghada.}
 \label{Fig_circuit_whole}
\end{figure}

\begin{figure}[hbtp]
    \begin{tabular}{cc}
      \begin{minipage}[t]{0.45\hsize}
        \centering
        \includegraphics[keepaspectratio, scale=0.45]
		{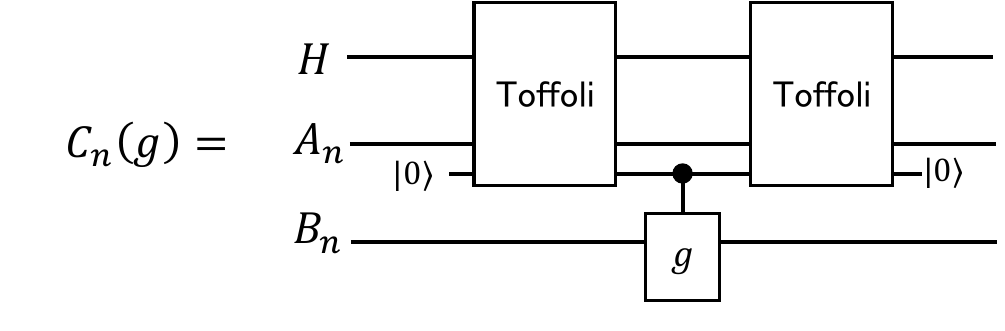}
      \end{minipage} &
      \begin{minipage}[t]{0.45\hsize}
        \centering
        \includegraphics[keepaspectratio, scale=0.45]
		{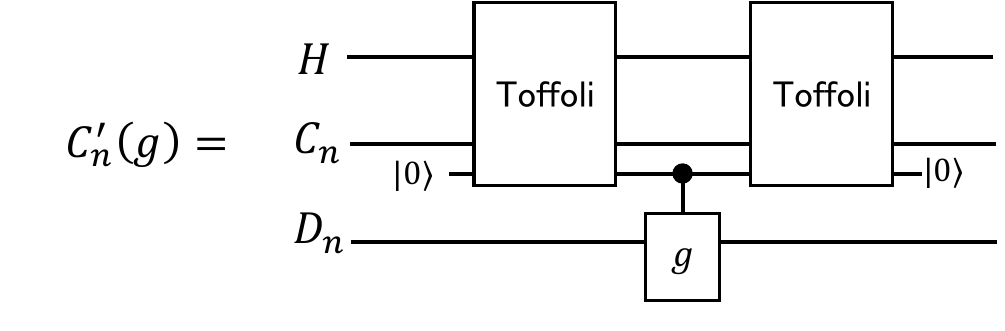}
      \end{minipage}
    \end{tabular}
 \caption{The definition of the components $C_n(g)$ and $C'_n(g)$.}
 \label{Fig_circuit_SWAP}
  \end{figure}

\begin{Algo}[Generalized Hadamard test: \Ghada]\label{Algo_G-Hadamard}
For any positive integers $m$ and $d$, a binary bit $b \in \Bset$, let $A_n, B_n, C_n$ and $D_n~~(1 \leq n \leq m)$ be quantum systems whose dimensions satisfy
\begin{equation*}
\dim A_n =\dim C_n =2, \quad
\dim B_n =\dim D_n = d,
\end{equation*}
and define 
\begin{equation*}
A := \bigotimes_{1 \leq n \leq m}A_n,~~
B := \bigotimes_{1 \leq n \leq m}B_n,~~
C := \bigotimes_{1 \leq n \leq m}C_n,~\Mand~
D := \bigotimes_{1 \leq n \leq m}D_n.
\end{equation*}
Additionally, for any $\varphi, \psi \in \mathbb{C}^{(2m)^d}$, let $U_\varphi$ and $U_\psi$ be unitary operators that satisfy $U_\varphi|\Init \rangle_{ABCDE} = \Ket{\varphi}$
and $U_\psi|\Init \rangle_{ABCDE} = \Ket{\psi}$ respectively where $E$ is a finite dimensional quantum system, and $|\Init\rangle$ is some initial state on $ABCDE$.

Under these definitions and notations, a new algorithm \Ghada,  is defined as in~\cref{Fig_circuit_whole} where $C_n(g)$ and $C_n'(g)$ are defined in~\cref{Fig_circuit_SWAP}, and the final measurement is performed with the computational basis on the first qubit system.
\end{Algo}

\begin{Prop}\label{}
The algorithm \Ghada~ uses $2 m$ queries to the controlled-$g$ operation. The probability $P_{0|b}~(b \in \Bset)$ of obtaining zero as the measurement outcome is 
\begin{equation*}
P_{0 |b = 0}
= \frac{1 + \Re \langle \varphi |(I \oplus g^*)_{AB}^{\otimes m} \otimes (I \oplus g)_{CD}^{\otimes m} \otimes I_E \Ket{\psi}}{2}
\end{equation*}
when $b$ is set to zero, and  
\begin{equation*}
P_{0 |b = 1}
= \frac{1 + \Im \langle \varphi |(I \oplus g^*)_{AB}^{\otimes m} \otimes (I \oplus g)_{CD}^{\otimes m} \otimes I_E \Ket{\psi}}{2}
\end{equation*}
when $b$ is set to one.
\footnote{Note $(I \oplus g)_{AB}^{\otimes m} = \bigotimes_{1 \leq n \leq m} (I \oplus g)_{A_n B_n}$ and $(I \oplus g)_{CD}^{\otimes m} = \bigotimes_{1 \leq n \leq m} (I \oplus g)_{C_n D_n}$ to be more precise.}
\end{Prop}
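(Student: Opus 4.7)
The proof is a direct generalization of the Hadamard test: the first qubit plays the role of the ancilla, the middle block implements a controlled version of
\[
W_g := (I \oplus g^*)_{AB}^{\otimes m} \otimes (I \oplus g)_{CD}^{\otimes m} \otimes I_E
\]
on the joint $ABCDE$ register, and the outer $H$ gates together with a phase gate selected by $b$ convert the controlled-$W_g$ action into the real (respectively imaginary) part of $\langle \varphi | W_g | \psi \rangle$. The plan is therefore to trace the ancilla through the circuit, verify that the middle block realizes the correct controlled operator using exactly $2m$ calls to \text{C-}$g$, and carry out the routine amplitude calculation at the end.

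First I would track the ancilla. After the initial $H$ and the $b$-dependent phase gate, the ancilla is in the state $\tfrac{1}{\sqrt{2}}(|0\rangle + (-i)^b |1\rangle)$; using $U_\varphi$ and $U_\psi$ as conditional preparations (with $U_\varphi$ attached to the $|0\rangle$ branch and $U_\psi$ to the $|1\rangle$ branch, both acting on $|\Init\rangle_{ABCDE}$), the joint state just before the middle block is
\[
\tfrac{1}{\sqrt{2}}\bigl( |0\rangle |\varphi\rangle + (-i)^b |1\rangle |\psi\rangle \bigr).
\]

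The core step is to show that the middle block acts as controlled-$W_g$, that is, as the identity on the $|0\rangle$ branch and as $W_g$ on the $|1\rangle$ branch. This decomposes into $m$ independent pieces, one per index $n$, each consisting of the two gadgets $C_n(g)$ and $C_n'(g)$ from~\cref{Fig_circuit_SWAP}. The $(I \oplus g)_{C_n D_n}$ factor is the textbook doubly-controlled-$g$: two Toffolis sandwich a single \text{C-}$g$ so that $g$ is applied to $D_n$ exactly when the ancilla and $C_n$ are both $|1\rangle$. The $(I \oplus g^*)_{A_n B_n}$ factor is the nontrivial new ingredient; here one exploits the white-bullet convention (defined in~\cref{sec_Preli} as $X \cdot \text{C-}g \cdot X$ on the control qubit), together with Toffoli conjugation, so that the induced action on $B_n$, written in the computational basis, has matrix entries equal to the entrywise complex conjugates of those of $g$. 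I would verify this identity by a four-case analysis on the classical values of the two control bits, matching the resulting matrix against $(I \oplus g^*)_{A_n B_n}$ entry by entry. This case check is where the ``generalized'' content of the Hadamard test sits and is the only step that requires real work.

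Granting the middle block, the state just before the final Hadamard is $\tfrac{1}{\sqrt{2}}\bigl(|0\rangle |\varphi\rangle + (-i)^b |1\rangle W_g |\psi\rangle\bigr)$. Applying the final Hadamard on the ancilla and using $\|a + zb\|^2 = \|a\|^2 + \|b\|^2 + 2\Re(\overline{z}\langle a | b\rangle)$ gives
\[
P_{0|b} = \tfrac{1}{4}\bigl\| |\varphi\rangle + (-i)^b W_g |\psi\rangle \bigr\|^2 = \tfrac{1 + \Re\bigl( (-i)^b \langle \varphi | W_g | \psi \rangle\bigr)}{2},
\]
which specializes to the real and imaginary parts for $b = 0$ and $b = 1$ respectively, matching the statement. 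The query count is then immediate: each of $C_n(g)$ and $C_n'(g)$ contains exactly one invocation of \text{C-}$g$, and there are $m$ of each, for a total of $2m$.
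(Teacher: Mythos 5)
Your high-level plan (trace the ancilla, verify the middle block, do the amplitude calculation) is reasonable, but the core step is wrong: the circuit does \emph{not} implement a controlled version of $(I \oplus g^*)_{AB}^{\otimes m}$ on any branch, and your proposed ``four-case analysis'' showing that white-bullet/Toffoli conjugation produces matrix entries equal to the complex conjugates of those of $g$ cannot succeed. Conjugating \textrm{C-}$g$ by $X$ on a control wire or by Toffolis only permutes which control values trigger $g$; it never conjugates the entries of $g$ on the target. Indeed, the paper stresses that building \textrm{C-}$g^*$ from \textrm{C-}$g$ is hard in general, so a proof requiring the middle block to literally realize controlled-$(I\oplus g^*)$ with $2m$ \textrm{C-}$g$ calls is a non-starter.

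The actual mechanism is different, and your reconstruction of the circuit's ordering misses it. In the paper's circuit $U_\varphi$ and $U_\psi$ are \emph{not} both applied before the middle block: first the $C_n(g)$ gates apply $(I\oplus g)^{\otimes m}_{AB}$ to the branch carrying $\Ket{\varphi}$ (while that branch is flagged $\Ket{1}$), then an $X$ flips the ancilla, $U_\psi$ prepares $\Ket{\psi}$ on the new $\Ket{1}$ branch, and the $C'_n(g)$ gates apply $(I\oplus g)^{\otimes m}_{CD}$ to that branch. So just before the final Hadamard the state is
\begin{equation*}
\tfrac{1}{\sqrt{2}}\Ket{0}\otimes (I\oplus g)^{\otimes m}_{AB}\Ket{\varphi}
+ \tfrac{1}{\sqrt{2}}(-i)^{b}\Ket{1}\otimes (I\oplus g)^{\otimes m}_{CD}\Ket{\psi},
\end{equation*}
and the operator $(I\oplus g^*)_{AB}^{\otimes m}$ appears only when you form the cross term: the bra of the $\varphi$ branch is $\bra{\varphi}\bigl[(I\oplus g)^{\otimes m}_{AB}\bigr]^{*} = \bra{\varphi}(I\oplus g^*)^{\otimes m}_{AB}$, and since $AB$ and $CD$ are disjoint registers the overlap is exactly $\bra{\varphi}(I\oplus g^*)^{\otimes m}_{AB}\otimes(I\oplus g)^{\otimes m}_{CD}\otimes I_E\Ket{\psi}$. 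Your final formula is numerically the same as what this yields, but only because applying $W^{*}$ to the bra side is equivalent in the inner product to applying $W$ to the ket side; as a description of what the circuit physically does, and hence as a proof, your account does not go through. You should redo the derivation tracking the two branches separately as above.
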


\begin{proof}
The number of the controlled-$g$ in \Ghada~ is obviously $2m$.
\par
In addition, the calculation of the desired probability is not so difficult.
First let us investigate how $C_n(g)$ and $C_n'(g)$ work on $HA_n B_n$ and $H C_n D_n$ respectively.
By definition, $C_n(g)$ acts as the identity on $|0_H, b_{A_n}, j_{B_n}\rangle$, and acts as 
$|1_H, b_{A_n}, j_{B_n}\rangle \mapsto |1_H\rangle (I \oplus g)| b_{A_n}, j_{B_n}\rangle$,
where $\{|b_{A_n}\rangle\}_{b \in \Bset}$ is the computational basis on the system $A_n$, and 
$\{|j_{B_n}\rangle\}_{j \leq d}$ is the computational basis on the system $B_n$. 
This means $C_n(g)$ acts in the same manner as the controlled-controlled-$g$ operation on $HA_n B_n$.
Similarly,  $C'_n(g)$ acts the same as the controlled-controlled-$g$ operation on $H C_n D_n$.

\par
In the rest of proof, we directly calculate the evolution of the initial state step by step.
By a simple calculation, the state right before applying $\Pi_{1 \leq n \leq m} C_n(g)$ in~\cref{Fig_circuit_whole} is
\begin{equation}\label{Eq_algo_step1}
\frac{1}{\sqrt{2}}\Ket{1} \otimes \Ket{\varphi}_{ABCDE}
+ \frac{1}{\sqrt{2}}\Ket{0} \otimes \Ket{\Init}_{ABCDE}
\end{equation}
Since $C_n(g)$'s act as the controlled-controlled-$g$ operation on $H A_n B_n$, after $\Pi_{1 \leq n \leq m} C_n(g)$ is applied, the state~\eqref{Eq_algo_step1}  becomes
\begin{equation}\label{Eq_algo_step2}
\frac{1}{\sqrt{2}}\Ket{1} \otimes (I \oplus g)^{\otimes m}_{AB} \Ket{\varphi}_{ABCDE}
+ \frac{1}{\sqrt{2}}\Ket{0} \otimes \Ket{\Init}_{ABCDE},
\end{equation}
Note that the controlled-$g$ operation on $A_n B_n$ is identical to $(I \oplus g)_{A_n B_n}$.
As in~\cref{Fig_circuit_whole}, we apply $X$ and the controlled-$U_\psi$ operation to the state~\eqref{Eq_algo_step2}, which yields 
\begin{equation}\label{Eq_algo_step3}
\frac{1}{\sqrt{2}}\Ket{0} \otimes (I \oplus g)^{\otimes m}_{AB} \Ket{\varphi}_{ABCDE}
+ \frac{1}{\sqrt{2}}\Ket{1} \otimes \Ket{\psi}_{ABCDE}.
\end{equation}
Similar to the case~\eqref{Eq_algo_step3}, the state then changes to 
\begin{equation}\label{Eq_algo_step4}
\frac{1}{\sqrt{2}}\Ket{0} \otimes (I \oplus g)^{\otimes m}_{AB} \Ket{\varphi}_{ABCDE}
+ \frac{1}{\sqrt{2}}\Ket{1} \otimes(I \oplus g)^{\otimes m}_{CD} \Ket{\psi}_{ABCDE}.
\end{equation}
by applying $\Pi_{1 \leq n \leq m} C_n'(g)$. Finally, $S^{-b}$ and $H$ are applied to the state~\eqref{Eq_algo_step4} which yields
\begin{align*}
&\frac{1}{2}\Ket{0} \otimes \left((I \oplus g)^{\otimes m}_{AB} \Ket{\varphi}_{ABCDE}  + (-i)^{b} (I \oplus g)^{\otimes m}_{CD} \Ket{\psi}_{ABCDE} \right)\\
&+ \frac{1}{2}\Ket{1} \otimes \left((I \oplus g)^{\otimes m}_{AB} \Ket{\varphi}_{ABCDE}  - (-i)^{b} (I \oplus g)^{\otimes m}_{CD} \Ket{\psi}_{ABCDE} \right).
\end{align*}
Therefore the final measurement produces zero with probability
\begin{equation*}
\frac{1}{2} \left((I \oplus g)^{\otimes m}_{AB} \Ket{\varphi}_{ABCDE}  + (-i)^{b} (I \oplus g)^{\otimes m}_{CD} \Ket{\psi}_{ABCDE} \right)^*
\frac{1}{2} \left((I \oplus g)^{\otimes m}_{AB} \Ket{\varphi}_{ABCDE}  + (-i)^{b} (I \oplus g)^{\otimes m}_{CD} \Ket{\psi}_{ABCDE} \right)
\end{equation*}
that equals to 
\begin{equation*}
P_{0 |b = 0}
= \frac{1 + \Re \langle \varphi |(I \oplus g^*)_{AB}^{\otimes m} \otimes (I \oplus g)_{CD}^{\otimes m} \otimes I_E \Ket{\psi}}{2}
\end{equation*}
when $b$ is set to zero, and  
\begin{equation*}
P_{0 |b = 1}
= \frac{1 + \Im \langle \varphi |(I \oplus g^*)_{AB}^{\otimes m} \otimes (I \oplus g)_{CD}^{\otimes m} \otimes I_E \Ket{\psi}}{2}
\end{equation*}
when $b$ is set to one.
This completes proof.
\end{proof}

\begin{Prop}\label{Prop_B_upper}
For any $f \in L^2(\Ud)$, $B_\varepsilon(f) = O(\REPf)$.
\end{Prop}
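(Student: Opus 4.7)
The plan is to set $m := \REPf + 1$ and design an algorithm $\mathcal{A}$ using $2m = O(\REPf)$ queries whose conditional expectation equals the orthogonal projection $p := Q_{\leq 2m} f$. By the definition of $\REPf$ we have $\|p - f\|_{L^2}^2 = \|Q^\perp_{\leq 2m} f\|_{L^2}^2 < \varepsilon$, so once unbiasedness for $p$ is established the bias bound follows immediately from
\[ \Bias_G(\mathcal{A}, f) = \E_G\bigl[|\E_S[\hat{f}(s)|g] - f(g)|^2\bigr] = \|p - f\|_{L^2}^2 < \varepsilon. \]

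To build the unbiased estimator, I would first apply \cref{Prop_poly_m_projection} together with \cref{Fact_Mixed-Schur-Weyl} to write the polynomial $p(g)$ as
\[ p(g) = \Tr A \left( \bigoplus_{0 \leq n, n' \leq 2m} g^{\otimes n} \otimes g^{* \otimes n'} \otimes I_E \right) \]
for some finite-dimensional matrix $A$ (the passage between $\bar g$ and $g^*$ is handled by relabeling tensor indices, since $g^* = \bar g^T$). Performing the singular-value decomposition $A = \sum_k \sigma_k |\varphi_k\rangle\langle\psi_k|$ with $\sigma_k \geq 0$ and setting $\|A\|_1 = \sum_k \sigma_k$, this becomes $p(g) = \sum_k \sigma_k \langle \psi_k | M(g) | \varphi_k \rangle$, where $M(g)$ denotes the direct-sum operator above. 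Each inner product $\langle \psi_k | M(g) | \varphi_k \rangle$ can then be estimated at a cost of $2m$ queries by the $\Ghada$ primitive of \cref{Algo_G-Hadamard}: running it with states $|\psi_k\rangle, |\varphi_k\rangle$ and control bit $b \in \{0,1\}$, the $\pm 1$-valued random variable $1 - 2o$ (where $o$ is the final measurement outcome) has expectation $\Re \langle \psi_k | M(g) | \varphi_k \rangle$ when $b = 0$ and $\Im \langle \psi_k | M(g) | \varphi_k \rangle$ when $b = 1$.

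Finally, the individual estimators are combined by importance sampling: sample $k$ with probability $\sigma_k / \|A\|_1$, sample $b \in \{0,1\}$ uniformly, run $\Ghada$ with inputs $|\psi_k\rangle, |\varphi_k\rangle, b$ to obtain outcome $o$, and output $2\|A\|_1 \cdot \iota^b \cdot (1 - 2o)$, where $\iota$ denotes the imaginary unit. A direct calculation gives
\[ \E\bigl[2\|A\|_1 \cdot \iota^b \cdot (1-2o)\,\big|\,g\bigr] = \sum_k \sigma_k \bigl[ \Re \langle \psi_k | M(g) | \varphi_k\rangle + \iota\, \Im \langle \psi_k | M(g) | \varphi_k\rangle \bigr] = p(g), \]
so the estimator is unbiased for $p(g)$, using $2m = O(\REPf)$ queries per run. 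The sampling of $k$ and $b$ and the state preparations $U_{\varphi_k}, U_{\psi_k}$ do not act on $g$, so they are free in the model of \cref{Def_gene_q}. The only technical care required is in the bookkeeping of the real/imaginary branches and the trace-norm normalization in the importance-sampling step; no new ingredient beyond the $\Ghada$ primitive and mixed Schur--Weyl duality is needed.
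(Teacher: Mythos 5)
Your proposal is correct and follows essentially the same route as the paper: approximate $f$ by its projection $Q_{\leq 2m}f$, write that polynomial as $\Tr A(\bigoplus_{n,n'} g^{\otimes n}\otimes g^{\ast\otimes n'}\otimes I_E)$, and combine the singular value decomposition of $A$ with importance sampling and the \Ghada-based unbiased inner-product estimator, so that the averaged bias is exactly $\|Q_{\leq 2m}^{\perp}f\|_{L^2}^2<\varepsilon$. The only discrepancy is a constant-factor bookkeeping one (estimating a degree-$2m$ polynomial costs $4m$ controlled-$g$ queries, not $2m$, since \Ghada~with block parameter $2m$ uses $2\cdot 2m$ queries), which does not affect the $O(\REPf)$ conclusion.
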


\begin{proof}
Our proof consists of two parts. In the first part, we create an estimator for
the inner product 
\begin{equation*}
\langle \tilde{\varphi} |\left(\bigoplus_{0 \leq n, n' \leq m} g^{\otimes n} \otimes g^{\ast \otimes n'} \otimes I_E \right) \Ket{\tilde{\psi}}
\end{equation*}

when a system $E$, and  $\tilde{\varphi}, \tilde{\psi}$ are given.
In the second part, based on the estimator, we propose an estimation process for any $f \in L^2(\Ud)$ and analyze its efficiency.

\paragraph{Proof of the first part.}
First, observe $(A \oplus B) \otimes (C \oplus D) = A \otimes(C \oplus D)  \oplus B \otimes(C \oplus D)$ and $A \otimes(B \oplus C) \simeq (A \otimes B) \oplus (A \otimes C) $ up to unitary equivalence, for any rectangular matrices $A, B, C$ and $D$.
This observation leads to 
\begin{equation*}
(I \oplus g)^{\otimes m} \simeq
\left( \bigoplus_{0 \leq n \leq m} g^{\otimes m}\right) \oplus \Garb,  ~~\Mand~~
(I \oplus g^\ast)^{\otimes m} \simeq
\left(\bigoplus_{0 \leq n \leq m} g^{\ast \otimes m}\right) \oplus \Garb
\end{equation*}
where $\Garb$ is some unitary, and $g^{\otimes 0} = I_d$.
Therefore
\begin{equation*}
(I \oplus g)^{\otimes m}_{AB}
\otimes
(I \oplus g^\ast)^{\otimes m}_{CD}
\simeq
\left(\bigoplus_{0 \leq n, n' \leq m} g^{\otimes n} \otimes g^{\ast \otimes n'} \right) \oplus \Garb
\end{equation*}
which further leads to 
\begin{equation*}
(I \oplus g)^{\otimes m}_{AB}
\otimes
(I \oplus g^\ast)^{\otimes m}_{CD} \otimes I_E
\simeq
\left(\bigoplus_{0 \leq n, n' \leq m} g^{\otimes n} \otimes g^{\ast \otimes n'} \right) \otimes I_E \oplus \Garb
\simeq
\left(\bigoplus_{0 \leq n, n' \leq m} g^{\otimes n} \otimes g^{\ast \otimes n'} \otimes I_E \right) \oplus \Garb
\end{equation*}
In other words, there is a unitary $W$ such that
\begin{equation*}
W^\ast \cdot (I \oplus g)^{\otimes m}_{AB}
\otimes
(I \oplus g^\ast)^{\otimes m}_{CD} \otimes I_E \cdot W
=
\left(\bigoplus_{0 \leq n, n' \leq m} g^{\otimes n} \otimes g^{\ast \otimes n'} \otimes I_E \right) \oplus \Garb.
\end{equation*}

\par
Now for any vectors $\tilde{\varphi}, \tilde{\psi}$ that have the same dimension as $\left(\bigoplus_{0 \leq n, n' \leq m} g^{\otimes n} \otimes g^{\ast \otimes n'} \otimes I_E\right)$, 
we implement~\cref{Algo_G-Hadamard}, \Ghada, with $\varphi = W (\tilde{\varphi} \oplus \mathbf{0}_\Garb)$ and $\psi = W(\tilde{\psi} \oplus \mathbf{0}_\Garb)$.
Then the probability of obtaining zero at the final measurement is
\begin{equation*}
P_{0 |b = 0}
= \frac{1 + \Re \langle \tilde{\varphi} |\left(\bigoplus_{0 \leq n, n' \leq m} g^{\otimes n} \otimes g^{\ast \otimes n'} \otimes I_E\right) \Ket{\tilde{\psi}}}{2}
\end{equation*}
when $b$ is set to zero, and  
\begin{equation*}
P_{0 |b = 1}
= \frac{1 + \Im \langle \tilde{\varphi} |\left(\bigoplus_{0 \leq n, n' \leq m} g^{\otimes n} \otimes g^{\ast \otimes n'} \otimes I_E \right) \Ket{\tilde{\psi}}}{2}
\end{equation*}
when $b$ is set to one.
As shown below, this implementation yields an unbiased estimator for 

\begin{equation*}
\langle \tilde{\varphi} |\left(\bigoplus_{0 \leq n, n' \leq m} g^{\otimes n} \otimes g^{\ast \otimes n'} \otimes I_E \right) \Ket{\tilde{\psi}}
\end{equation*}
with $2m$ controlled-$g$ operations. 

\par

To see its unbiasedness, let $b \in_U \Bset$ behave as the unbiased coin and perform~\cref{Algo_G-Hadamard} according to the value of $b$. Denote its measurement outcome by $M \in \Bset$ and then define the estimator $\hat{f}:(M, b) \in \Bset^2 \mapsto \mathbb{C}$ as
\begin{equation*}
\hat{f}(0, 0) = 4 - (1 + i), \quad
\hat{f}(0, 1) = 4i - (1 + i),\quad
\hat{f}(1, 1) = \hat{f}(1, 0) = -(1+i).
\end{equation*}
Then
\begin{align*}
\E[\hat{f}(M, b) + (1 + i)]&= \Pr(0, 0) \hat{f}(0, 0) + \Pr(0, 1)\hat{f}(0, 1) = 
\langle \tilde{\varphi} |\left(\bigoplus_{0 \leq n, n' \leq m} g^{\otimes n} \otimes g^{\ast \otimes n'} \otimes I_E \right) \Ket{\tilde{\psi}} + (1 + i).
\end{align*}
and therefore
\begin{equation*}
\E[\hat{f}(M, b)] = \langle \tilde{\varphi} |\left(\bigoplus_{0 \leq n, n' \leq m} g^{\otimes n} \otimes g^{\ast \otimes n'} \otimes I_E \right) \Ket{\tilde{\psi}}.
\end{equation*}

Denote this estimation process by $\Est{\tilde{\varphi}}{\tilde{\psi}}$.

\paragraph{Proof of the second part.}
Based on the above process $\Est{\tilde{\varphi}}{\tilde{\psi}}$, the second part of proof shows how to estimate a function $f \in L^2(\Ud)$ via polynomial approximation.
For any $f \in L^2(\Ud)$, let $m_0 := (\REPf +1)$ and observe that the projection of $f$ onto the space $\mathcal{Q}_{ \leq 2m_0}(\Ud)$ defined in Equation~\eqref{eq_def_Q} may be written as
\begin{equation}\label{Eq_proj_f}
Q_{\leq 2 m_0}(f)
= \sum_{\substack{\alpha, \alpha' \in \mathbb{N}_0^{d \times d}\\ |\alpha| + |\alpha'| \leq 2 m_0}} a(\alpha, \alpha') g^\alpha \bar{g}^{\alpha'}
\end{equation}
where $|\alpha| := \sum_{1 \leq i, j \leq d} \alpha_{ij}$ for $\alpha = (\alpha_{i, j})_{1 \leq i, j \leq d} \in \mathbb{N}_0^{d \times d}$,  $a(\alpha, \alpha') \in \mathbb{C}$, 
$g^\alpha := g_{11}^{\alpha_{11}} g_{12}^{\alpha_{12}} \cdots g_{dd}^{\alpha_{dd}}$, 
$\bar{g}^{\alpha'} := \bar{g}_{11}^{\alpha'_{11}} \bar{g}_{12}^{\alpha'_{12}} \cdots \bar{g}_{dd}^{\alpha'_{dd}}$.
\par
In the following, we aim to estimate $Q_{\leq 2 m_0}f$ instead of $f$ itself.
To this end, first observe that any $g^\alpha \bar{g}^{\alpha'}$ appears as a matrix element of $ \bigoplus_{0 \leq n, n' \leq 2m_0} g^{\otimes n} \otimes g^{\ast \otimes n'}$
that follows from the definition of tensor product of matrices.
In addition, let us observe that any function $f(x_{11}, x_{12}, \ldots, x_{dd})$ of the form $f = \sum_{i,j} a_{ij} x_{ij}$ may be expressed as $f =\Tr A X$ by a matrix $A$ when $X = (x_{ij})$.
These observations imply that
 there exists a matrix $A$ such that
\begin{equation*}
 \sum_{\substack{\alpha, \alpha' \in \mathbb{N}_0^{d \times d}\\ |\alpha| + |\alpha'| \leq 2 m_0}} a(\alpha, \alpha') g^\alpha \bar{g}^{\alpha'}
= \Tr \left[ A \bigoplus_{0 \leq n, n' \leq 2m_0} g^{\otimes n} \otimes g^{\ast \otimes n'} \right].
\end{equation*}
Together with Equation~\eqref{Eq_proj_f},
this equation further implies 
\begin{equation*}
Q_{\leq 2 m_0}(f)
= \Tr \left[ U D V \bigoplus_{0 \leq n, n' \leq 2m_0} g^{\otimes n} \otimes g^{\ast \otimes n'} \right]
= \sum_i \sigma_i \langle e_i | V \bigoplus_{0 \leq n, n' \leq 2m_0} g^{\otimes n} \otimes g^{\ast \otimes n'} U \Ket{e_i} 
\end{equation*}
by the singular value decomposition $A = U D V$, where $\sigma_i$'s are the singular values of $A$ and $e_i$'s are the standard basis.
This establishes
\begin{equation}\label{Eq_proj_est}
Q_{\leq 2 m_0}(f) = \|A\|_1 \cdot \sum_i \frac{\sigma_i}{\|A\|_1} \langle e_i | \left( V \bigoplus_{0 \leq n, n' \leq 2m_0} g^{\otimes n} \otimes g^{\otimes n'} U \right) \Ket{e_i}.
\end{equation}

We can now describe our estimation scheme for $Q_{\leq 2 m_0}f$ in a simple manner:
In our scheme, we first randomly choose some coordinate $i$ with probability $\sigma_i / \|A\|_1$, and then estimate 
$\langle e_i | \left( V \bigoplus_{0 \leq n, n' \leq 2m_0} g^{\otimes n} \otimes g^{\ast \otimes n'} U \right) \Ket{e_i}$ by $\Est{V^\ast e_i}{U e_i}$ with $E$ satisfying $\dim E = 1$, and finally multiply the value by $\|A\|_1$.
This scheme uses $4 m_0$ controlled-$g$ operations, and satisfies unibiasedness property for $Q_{ \leq 2 m_0}f$ due to the expression~\eqref{Eq_proj_est} and unbiasedness property of $\Est{\varphi}{\psi}$.

\par
We now investigate the bias of our estimation scheme for the original function $f$.
At each point $g \in \Ud$, the bias of our estimation scheme is 
\begin{equation*}
|Q_{ \leq 2 m_0}f(g) - f(g)|^2
\end{equation*}
and therefore the quantity $\Bias_G$, the average of the bias over the group $\Ud$ (with respect to the Haar measure), satisfies
\begin{equation*}
\Bias_G
:= \int_G |Q_{ \leq 2 m_0}f(g) - f(g)|^2 dg
= \|Q_{ \leq 2 m_0}f(g) - f(g)\|^2_{L^2} = \|Q_{ \leq 2 m_0}^\perp f\|^2_{L^2} < \varepsilon
\end{equation*}
by Parseval type identity (in~\cref{Fact_Parseval}), $m_0 := \REPf +1$ and the definition of $\REPf$ (in~\cref{Def_REPf}).
These arguments show $B_{\varepsilon}(f) \leq 4 m_0 = 4(\REPf +1) = O(\REPf)$.
\end{proof}

\cref{Thm_tight_B} is a direct consequence of \cref{Prop_Lower_B,Prop_B_upper}.
\begin{Thm1}[Rephrased]
For any $f \in L^2(\Ud)$, $B_\varepsilon(f) = \Theta(\REPf)$.
\end{Thm1}

In the proof of~\cref{Prop_B_upper}, the construction of an unbiased estimator for any polynomial function $f \in \mathcal{Q}_{ \leq m}(\Ud)$  is provided. Of course, this estimator may be used for the framework of PAC learning. In~\cref{Prop_poly_est_PAC},  we show an upper bound on the query complexity of estimating a polynomial $f$ in PAC learning framework.

\begin{Prop1}
For a $f \in \mathcal{Q}_{\leq m}(\Ud)$,
$Q_{\varepsilon, \delta}(f)
= O\left(\frac{\|A\|_1^2 \log \frac{1}{\delta}}{\varepsilon^2 }\cdot m\right)$ for any matrix $A$ satisfying 
\begin{equation*}
f(g) =\Tr A\left(\bigoplus_{0 \leq n, n' \leq m} g^{\otimes n} \otimes g^{\ast \otimes n'} \otimes I_E\right).
\end{equation*}
\end{Prop1}

\begin{proof}
First recall a general strategy of estimation obtained by Hoeffding's inequality.
For $\mathbb{R}$-valued i.i.d. random variables $X_1, \ldots, X_n$ taking values on the interval $[a, b]$, Hoeffding's inequality ensures
\begin{equation*}
\Pr\left(|\overline{X} - \E[\bar{X}]| > t\right) \leq 2 \exp\left(\frac{-2t^2 n}{(b - a)^2}\right)
\end{equation*}
for any positive $t >0$, where $\overline{X} := \frac{X_1 + \cdots + X_n}{n}$.
Therefore setting $t = \varepsilon$ and $n > \frac{(b-a)^2}{\varepsilon^2} \log \frac{2}{\delta}$ yields
\begin{equation*}
\Pr\left(|\overline{X} - \E[\bar{X}]| > \varepsilon\right)
\leq 2 \exp \left(\frac{-2 \varepsilon^2 n}{(b - a)^2}\right)
< \delta.
\end{equation*}
Therefore, to estimate a true value with an unbiased estimator in the $(\varepsilon, \delta)$-PAC-learning framework, it suffices to repeat the estimation process for $n = O\left(\frac{(b -a)^2}{\varepsilon^2}\log \frac{1}{\delta} \right)$ times and take the average.
\par
For $\mathbb{C}$-valued random variables, we decompose $X_i = \Re X_i + i\Im X_i$ and obtain
\begin{align*}
\Pr\left( | \overline{\Re X} + i\overline{\Im X}| > \varepsilon \right)
&\leq \Pr\left( | \overline{\Re X}| + |\overline{\Im X}| > \varepsilon \right) \\
&\leq \Pr\left( | \overline{\Re X}| > \varepsilon/2 \right) + \Pr\left( |\overline{\Im X}| > \varepsilon/2 \right) \\
&\leq 4 \exp\left(\frac{-\varepsilon^2 n}{8C_0^2}\right)
\end{align*}
from the same argument with Hoeffding's inequality, 
where $C_0$ is the maximum possible values of $|X_i|$.
Therefore, it suffices to repeat the estimation process for $n = O\left(\frac{C_0^2}{\varepsilon^2}\log \frac{1}{\delta} \right)$ times and take the average.

\par
We now apply this argument to our estimation scheme that is essentially given in the proof of~\cref{Prop_B_upper}. 
In our estimation scheme, first observe the function $f$ may be expressed as
\begin{equation*}
f(g) = \|A\|_1 \cdot \sum_{i} \frac{\sigma_i}{\|A\|_1} \langle{e_i}|V \left(\bigoplus_{0 \leq n, n' \leq m} g^{\otimes n} \otimes g^{\ast \otimes n'} \otimes I_E\right)U \Ket{e_i},
\end{equation*}
where the singular value decomposition $A = UDV$ is applied.
And then as in the second part of~\cref{Prop_B_upper}, apply $\Est{V^\ast e_i}{U e_i}$ with probability $\frac{\sigma_i}{\|A\|_1}$ (Note that $\sum_i \frac{\sigma_i}{\|A\|_1} = 1$) and multiply by $\|A\|_1$.
This process requires $m$ queries to the $\textrm{C-}g$ operation, and yields an unbiased estimator for $f(g)$ that takes values on the interval $[-\|A\|_1, \|A\|_1]$, because
\begin{equation*}
|\langle{e_i}|V\Ket{e_i}| \leq \|e_i\|_2 \|V e_i\|_2 = 1
\end{equation*}
for any unitary operator $V$ by Cauchy--Schwarz inequality.
\par
Therefore from the above discussion we obtain
$Q_{\varepsilon, \delta}(f)
= O\left(\frac{\|A\|_1^2}{\varepsilon^2}\log \frac{1}{\delta}\cdot m\right)$.
\end{proof}

\section{Applications}\label{sec_Appli}
\cref{sec_Appli} has two subsections: \cref{subsec_Simp} and \cref{subsec_Opt}. 
\cref{subsec_Simp} discusses how to obtain simpler expressions of the quantity $\REPf$, and \cref{subsec_Opt} shows that our algorithm in fact works well even in PAC learning framework for specific functions.
\subsection{Simpler forms of $\REPf$}\label{subsec_Simp}
Here we aim to derive simpler forms of the quantity $\REPf$ for specific functions such as univariate polynomials, the trace function, and matrix elements of unitary irreducible representations, on $\Ud$.

\subsubsection{Univariate Polynomials}
Define 
$G(\alpha, d) := \int |g_{11}|^{2\alpha} dg$
which can be computed by properties of the Haar measure on the unitary group~\cite[Proposition~2.5]{Mec19}. For example,
\begin{equation*}
G(\alpha = 1, d) = \int |g_{11}|^{2} dg = \frac{1}{d},
\end{equation*}
since $g_{11}, \ldots, g_{1d}$ are i.i.d., and $\sum_{1 \leq i \leq d} |g_{1i}|^2 = 1$.
\begin{Prop}
Let $\alpha \in \mathbb{N}_0$ and $f(g) = g_{11}^\alpha$.
Then
\begin{equation*}
\REPf = 
\begin{cases}
\lfloor \frac{\alpha -1}{2} \rfloor &\text{if $\varepsilon < G(\alpha, d)$,}\\
0&\text{otherwise.}
\end{cases}
\end{equation*}
\end{Prop}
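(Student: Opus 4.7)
The plan is to exploit the fact that $f(g)=g_{11}^\alpha$ is a single matrix element of an irreducible representation of $\Ud$, so its Peter--Weyl decomposition has exactly one isotypic component and $\LTN{Q^\perp_{\leq 2m}f}^2$ takes only two possible values. Since $\Ket{1}^{\otimes\alpha}$ lies in the symmetric subspace $\mathrm{Sym}^\alpha(\mathbb{C}^d)$, which is the carrier space of the irrep $\pi_\lambda$ labeled by $\lambda=(\alpha,0,\ldots,0)\in\mathbb{Z}_+^d$, one has
\begin{equation*}
g_{11}^\alpha \;=\; \langle 1|^{\otimes\alpha}\,\pi_\lambda(g)\,\Ket{1}^{\otimes\alpha} \;\in\; M_{\pi_\lambda},
\end{equation*}
and $\LTN{f}^2=\int|g_{11}|^{2\alpha}\,dg = G(\alpha,d)$ by the definition of $G(\alpha,d)$.

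Next I determine for which $m$ the component $M_{\pi_\lambda}$ sits inside $\mathcal{Q}_{\leq 2m}(\Ud)$. By~\cref{Prop_poly_m_projection}, $\mathcal{Q}_{\leq 2m}(\Ud)=\bigoplus_{0\leq n,\bar n\leq 2m}\bigoplus_{\mu\in\Lambda_d(n,\bar n)}M_{\pi_\mu}$, so the question becomes whether $\lambda=(\alpha,0,\ldots,0)$ lies in $\Lambda_d(n,\bar n)$ for some admissible pair $(n,\bar n)$. From the definition of $\Lambda_d(n,\bar n)$ in~\cref{Fact_Mixed-Schur-Weyl}, every element of $\Lambda_d(n,\bar n)$ has total weight $\sum_i\mu_i=n-\bar n$ (the positive part $\lambda_+$ contributes $n-k$ and the negative part $\lambda_-$ contributes $-(\bar n-k)$). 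Since $\sum_i\lambda_i=\alpha$, one needs $n-\bar n=\alpha$ together with $n,\bar n\leq 2m$, which forces $\alpha\leq 2m$; conversely, when $\alpha\leq 2m$ the choice $(n,\bar n)=(\alpha,0)$ suffices. Combining this with the orthogonality of the isotypic components in Peter--Weyl, $\LTN{Q^\perp_{\leq 2m}f}^2$ equals $0$ when $\alpha\leq 2m$ and equals $G(\alpha,d)$ when $\alpha>2m$.

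Plugging this dichotomy into~\cref{Def_REPf} concludes the proof: when $\varepsilon\geq G(\alpha,d)$ no $m$ meets the defining inequality and so $\REPf=0$ by convention, while when $\varepsilon<G(\alpha,d)$ the inequality $\LTN{Q^\perp_{\leq 2m}f}^2\geq\varepsilon$ holds iff $2m<\alpha$, whose largest integer solution is $m=\lfloor(\alpha-1)/2\rfloor$.

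The main obstacle is the weight-counting argument in the second paragraph -- specifically, ruling out that $\pi_{(\alpha,0,\ldots,0)}$ could arise in some $g^{\otimes n}\otimes\bar g^{\otimes\bar n}$ with $n<\alpha$ via additional $g\otimes\bar g$ contractions. The cleanest way to handle this is to restrict mixed Schur--Weyl duality to the center $g=e^{i\theta}I$: the left-hand side of~\cref{Fact_Mixed-Schur-Weyl} becomes $e^{i(n-\bar n)\theta}I$, forcing every irrep appearing on the right to transform as the scalar $e^{i(\sum_i\mu_i)\theta}=e^{i\alpha\theta}$, i.e.~$n-\bar n=\alpha$ and in particular $n\geq\alpha$.
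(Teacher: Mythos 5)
Your proof is correct and follows the same route as the paper's: both reduce the claim to the dichotomy that $\|Q^\perp_{\leq 2m}f\|^2_{L^2}$ equals $G(\alpha,d)=\|f\|^2_{L^2}$ when $2m<\alpha$ and $0$ when $2m\geq\alpha$, and then read off $\REPf$ from \cref{Def_REPf}. The one substantive addition is your weight-counting (central character) argument showing that $(\alpha,0,\ldots,0)\in\Lambda_d(n,\bar n)$ forces $n-\bar n=\alpha$, hence $n\geq\alpha$; the paper asserts the orthogonality of $g_{11}^\alpha$ to $\mathcal{Q}_{\leq \alpha-1}(\Ud)$ without ruling out appearances of this irrep in mixed tensor powers $g^{\otimes n}\otimes\bar g^{\otimes\bar n}$ with $\bar n>0$, so your extra step genuinely closes a gap that \cref{Lem_poly_decomp} alone does not cover.
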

\begin{proof}
By defition of $G(\alpha, d)$, we have
\begin{equation*}
G(\alpha, d) = \|Q^\perp_{\leq (\alpha - 1)}f\|^2_{L^2} >  \|Q^\perp_{\leq \alpha}f\|^2_{L^2} = 0,
\end{equation*}
together with the fact that $f$ is an $\alpha$-degree polynomial and the definition of the projection $Q^\perp_{\leq \alpha}$.
Since 
\begin{equation*}
2\left\lfloor \frac{d-1}{2} \right\rfloor < d \leq 2\left(\left\lfloor \frac{d-1}{2} \right\rfloor + 1\right)
\end{equation*}
 for any $d \in \mathbb{N}_0$,
this shows the statement.

\end{proof}
As shown in~\cref{Lem_poly_decomp}, for a different $\alpha \in \mathbb{N}_0$, the function $g^\alpha_{11}$ belongs to a different orthogonal subspace $\mathcal{P}_\alpha(\Ud)$. Therefore the $L^2$ norm of a univariate polynomial $f = \sum_{0 \leq \alpha' \leq \alpha} a_{\alpha'} g_{11}^{\alpha'}$ satisfies
\begin{equation*}
\|f\|_{L^2}^2
= \sum_{0 \leq \alpha' \leq \alpha} |a_{\alpha'}|^2 \|g_{11}^{\alpha'} \|^2_{L^2}
= \sum_{0 \leq \alpha' \leq \alpha} |a_{\alpha'}|^2 G(\alpha', d).
\end{equation*}
This yields~\cref{Cor_univariate}.

\begin{Cor}\label{Cor_univariate}
For a degree $\alpha$, univariate polynomial $f(g) = \sum_{0 \leq \alpha' \leq \alpha} a_{\alpha'} g_{11}^{\alpha'}$,
\begin{equation*}
\REPf = \max \left\{m \mid \sum_{\alpha' > 2m} |a_{\alpha'}|^2 G(\alpha', d) \geq \varepsilon \right\}.
\end{equation*}
\end{Cor}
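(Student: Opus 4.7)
The plan is to compute $\|Q^\perp_{\leq 2m} f\|_{L^2}^2$ explicitly as a function of $m$ and then unwind the definition of $\REPf$ directly. The only nontrivial ingredient is \cref{Lem_poly_decomp}, which (as summarized in the paragraph preceding the corollary) decomposes the univariate monomials $\{g_{11}^{\alpha'}\}_{\alpha' \geq 0}$ into mutually orthogonal one-dimensional subspaces $\mathcal{P}_{\alpha'}(\Ud) \subset L^2(\Ud)$.

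First I would identify the action of $Q_{\leq 2m}$ on each monomial. Since $g_{11}^{\alpha'}$ is a polynomial of degree $\alpha'$, it lies in $\mathcal{Q}_{\leq 2m}(\Ud)$ whenever $\alpha' \leq 2m$, so $Q^\perp_{\leq 2m}$ annihilates it. When $\alpha' > 2m$, the orthogonality provided by \cref{Lem_poly_decomp}, combined with the Peter--Weyl description of $\mathcal{Q}_{\leq 2m}$ from \cref{Prop_poly_m_projection}, forces $\mathcal{P}_{\alpha'} \perp \mathcal{Q}_{\leq 2m}$, so $Q^\perp_{\leq 2m}$ acts as the identity on $g_{11}^{\alpha'}$. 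By linearity,
\[
Q^\perp_{\leq 2m} f \;=\; \sum_{\alpha' > 2m} a_{\alpha'}\, g_{11}^{\alpha'}.
\]

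Next I would take $L^2$ norms. Because the $\mathcal{P}_{\alpha'}$ are pairwise orthogonal, the cross terms vanish (this is exactly the calculation already performed in the text immediately before the corollary), so
\[
\|Q^\perp_{\leq 2m} f\|_{L^2}^2 \;=\; \sum_{\alpha' > 2m} |a_{\alpha'}|^2\, \|g_{11}^{\alpha'}\|_{L^2}^2 \;=\; \sum_{\alpha' > 2m} |a_{\alpha'}|^2\, G(\alpha',d),
\]
using $G(\alpha',d) = \int |g_{11}|^{2\alpha'}\,dg = \|g_{11}^{\alpha'}\|_{L^2}^2$. Substituting this into \cref{Def_REPf} yields exactly the claimed formula.

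I do not anticipate any serious obstacle: the corollary is essentially the Pythagorean theorem applied inside the orthogonal decomposition already established by \cref{Lem_poly_decomp}. The one subtle point is verifying that $Q^\perp_{\leq 2m}$ picks out precisely the high-degree monomials rather than splitting a single $g_{11}^{\alpha'}$ across several degree strata; but this is immediate once one notes that $g_{11}^{\alpha'}$ sits inside a single isotypic component of the Peter--Weyl decomposition corresponding to the weight $(\alpha',0,\ldots,0)$, which belongs to $\Lambda_d(n,\bar n)$ only when $n \geq \alpha'$.
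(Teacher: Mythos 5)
Your proposal is correct and follows essentially the same route as the paper: the orthogonality of the subspaces $\mathcal{P}_{\alpha'}(\Ud)$ from \cref{Lem_poly_decomp} gives the Pythagorean identity $\|Q^\perp_{\leq 2m}f\|_{L^2}^2 = \sum_{\alpha' > 2m}|a_{\alpha'}|^2 G(\alpha',d)$, and \cref{Def_REPf} then yields the formula. Your extra care in checking that $Q^\perp_{\leq 2m}$ acts as the identity (not merely nontrivially) on each $g_{11}^{\alpha'}$ with $\alpha' > 2m$, via \cref{Prop_poly_m_projection}, is a detail the paper leaves implicit but is verified correctly.
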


\subsubsection{The trace function}
Since the (normalized) trace function $f(g) = \frac{1}{d}\Tr g $ belongs to $\mathcal{P}_1(\Ud)$, we have~\cref{Fact_REP_trace}.
(The normalization factor does not provide any role for~\cref{Fact_REP_trace}, which plays a role rather in~\cref{Fact_PAC_trace}.)

\begin{Fact}\label{Fact_REP_trace}
When $f(g) = \frac{1}{d}\Tr g$, $\REPf = 0$.
\end{Fact}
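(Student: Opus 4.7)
The plan is to exploit that the (normalized) trace is already a polynomial of very low degree in the matrix entries of $g$. Writing $f(g) = \frac{1}{d}\Tr g = \frac{1}{d}\sum_{i=1}^{d} g_{ii}$, one sees immediately that $f$ is a linear combination of the $d$ monomials $g_{ii}$ of total degree $1$, so $f \in \mathcal{Q}_{\leq 1}(\Ud)$ by the definition in~\eqref{eq_def_Q}. In particular $f$ belongs to $\mathcal{Q}_{\leq 2m}(\Ud)$ for every integer $m \geq 1$, since $2m \geq 2 \geq 1$.

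Next I would invoke~\cref{Prop_poly_m_projection}, which identifies $\mathcal{Q}_{\leq 2m}(\Ud)$ with a direct sum of Peter--Weyl isotypic pieces and thereby realizes $Q_{\leq 2m}$ as the orthogonal projection onto that subspace. Since $f$ lies in the range of this projection, we get $Q_{\leq 2m} f = f$ and hence $Q^\perp_{\leq 2m} f = 0$, which yields $\|Q^\perp_{\leq 2m} f\|_{L^2}^2 = 0$ for every $m \geq 1$.

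Plugging this into~\cref{Def_REPf}, the defining condition $\|Q^\perp_{\leq 2m} f\|_{L^2}^2 \geq \varepsilon$ becomes $0 \geq \varepsilon$, which fails for any $\varepsilon > 0$ and any $m \geq 1$. Thus no positive integer belongs to the set in the definition, so by the standard convention that the maximum over the empty set equals $0$ we conclude $\REPf = 0$, as claimed.

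There is essentially no obstacle in this argument: the only ingredients are the trivial observation that $\Tr g$ has degree exactly one in the matrix entries of $g$, and the characterization of $\mathcal{Q}_{\leq 2m}(\Ud)$ from~\cref{Prop_poly_m_projection}. The normalization factor $1/d$ is immaterial here since it only rescales $f$ together with all of its projections, consistent with the remark preceding the statement that this factor will only matter when computing PAC-type complexities rather than $\REPf$.
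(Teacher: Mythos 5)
Your proof is correct and follows the same route as the paper, which simply observes that $\frac{1}{d}\Tr g$ is a degree-one polynomial (the paper notes $f \in \mathcal{P}_1(\Ud)$, you note $f \in \mathcal{Q}_{\leq 1}(\Ud)$), so that $Q^\perp_{\leq 2m}f = 0$ for every $m \geq 1$ and the maximum in \cref{Def_REPf} is $0$. Your extra care about the empty-set convention is fine and does not change anything.
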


\begin{Remark}
Even though $B_\varepsilon(f) = \Theta(\REPf)$ holds by~\cref{Thm_tight_B}, \cref{Fact_REP_trace} does not imply $B_\varepsilon(f) = 0$ because the $\Theta$ notation hides an additive constant factor.
\end{Remark}

\subsubsection{The determinant $\textrm{det}~g$}
By Shur's orthogonality relation~\cite[Section~4]{BT03}, the one dimensional irreducible representation $(\textrm{det} g, \mathbb{C})$ satisfies $\|\det g\|^2_{L^2} = 1$.
Since $\det g$ is a polynomial with degree $d$ by definition, we obtain~\cref{Fact_REP_det}.

\begin{Fact}\label{Fact_REP_det}
When $f(g) = \det~g$, $\REPf = \lfloor \frac{d - 1}{2}\rfloor$.
\end{Fact}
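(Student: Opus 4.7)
The plan is to combine the Peter--Weyl-type decomposition of $\mathcal{Q}_{\leq m}(\Ud)$ stated in~\cref{Prop_poly_m_projection} with the observation that $\det g$ is the matrix coefficient of the one-dimensional irreducible representation $\pi_\lambda$ associated with $\lambda = (1, 1, \ldots, 1) \in \mathbb{Z}_+^d$. Under this identification, the space $M_{\pi_\lambda}$ equals $\mathbb{C} \cdot \det g$, and the cited identity $\LTN{\det g}^2 = 1$ is exactly Schur orthogonality applied to this one-dimensional representation.

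Next, I would locate the smallest $m$ such that $\lambda = (1, \ldots, 1)$ appears in some $\Lambda_d(n, \bar{n})$ with $0 \le n, \bar{n} \le m$. Writing a candidate label as $(\lambda_+, \lambda_-)$ with $\lambda_+ \in \Lambda_{n-k, d_n}$ and $\lambda_- \in \overline{\Lambda}_{\bar{n}-k, d_{\bar{n}}}$ as in~\cref{Fact_Mixed-Schur-Weyl}, the nonnegativity of every entry of our $\lambda$ forces $\lambda_- = 0$ and hence $\bar{n} = k$, while $\lambda_+ = \lambda$ forces $n - k = d$ and $d_n = d$. The smallest admissible $m$ is therefore $m = d$ (take $k = 0$, $\bar{n} = 0$, $n = d$), so $\det g \in \mathcal{Q}_{\leq m}(\Ud)$ if and only if $m \ge d$.

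Combining these two observations via the Parseval-type identity of~\cref{Fact_Parseval} yields $\LTN{Q^\perp_{\leq 2m} \det g}^2 = 1$ when $2m < d$ and $= 0$ when $2m \ge d$. Consequently, for any $\varepsilon \in (0, 1]$, the largest $m$ for which $\LTN{Q^\perp_{\leq 2m} \det g}^2 \ge \varepsilon$ is precisely the largest integer with $2m < d$, namely $\lfloor (d-1)/2 \rfloor$, giving the claimed value of $\REPf$.

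The only real obstacle is the small combinatorial verification that $\lambda = (1, \ldots, 1)$ cannot be realized in any $\Lambda_d(n, \bar{n})$ with $n < d$; this is immediate once one records that $\lambda_+$ has nonnegative entries summing to $n - k$ while $-\lambda_-$ has nonnegative entries summing to $\bar{n} - k$, so the only way to recover $\sum_i \lambda_i = d$ is to place the entire mass on the $\lambda_+$ side, which forces $n \ge d$.
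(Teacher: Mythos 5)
Your proof is correct and follows essentially the same route as the paper: identify $\det g$ as the matrix coefficient of the one-dimensional irreducible representation labeled $(1,\ldots,1)$, use Schur orthogonality to get $\LTN{\det g}^2 = 1$, and observe that this label first appears at degree $d$, so that $\LTN{Q^\perp_{\leq 2m}\det g}^2$ jumps from $1$ to $0$ exactly at $2m \geq d$. Your explicit verification via the labels $\Lambda_d(n,\bar n)$ that $(1,\ldots,1)$ cannot occur for $n < d$ is in fact a slightly more careful justification than the paper's terse remark that ``$\det g$ is a polynomial with degree $d$,'' since it also rules out $\det g$ lying in any lower-degree component.
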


More generally, any irreducible representation $(\pi_\lambda, V_\lambda)$ over $\Ud$ satisfies $\|\pi_\lambda(g)_{i, j}\|^2_{L^2} = 1/\textrm{dim} V_\lambda$.
In addition, any label $\lambda$ must belong to some set $\Lambda_d(m, \bar{m})$ from which Mixed Schur--Weyl duality, in~\cref{Fact_Mixed-Schur-Weyl}, tells that $\pi_\lambda (g)_{i, j}$ is a linear combination of polynomials with degree exactly equal to $m + \bar{m}$.
These arguments show~\cref{Fact_REP_irreps}.

\begin{Fact}\label{Fact_REP_irreps}
For any unitary irreducible representation~$(\pi_\lambda, V_\lambda)$ whose label $\lambda$ belongs to a set $\Lambda_d(m, \bar{m})$,
\begin{equation*}
\REPf = 
\begin{cases}
 \left\lfloor \frac{m + \bar{m} - 1}{2} \right\rfloor &\varepsilon < 1/\textrm{dim} V_\lambda,\\
 0 &\text{otherwise.}
\end{cases}
\end{equation*}
\end{Fact}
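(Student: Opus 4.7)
The plan is to combine the two observations the author has already stated in the paragraph immediately preceding the statement: Schur's orthogonality relations give $\|\pi_\lambda(g)_{i,j}\|_{L^2}^2 = 1/\dim V_\lambda$, and Mixed Schur--Weyl duality (\cref{Fact_Mixed-Schur-Weyl}) gives that $f := \pi_\lambda(g)_{i,j}$ is a $\mathbb{C}$-linear combination of monomials in the entries of $g$ and $\bar{g}$ of total degree exactly $m + \bar{m}$, for the minimal $(m, \bar{m})$ such that $\lambda \in \Lambda_d(m, \bar{m})$. These two facts feed directly into \cref{Def_REPf}.

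First, \cref{Fact_Mixed-Schur-Weyl} realizes $\pi_\lambda$ as an irreducible summand of $g^{\otimes m} \otimes \bar{g}^{\otimes \bar{m}}$, and the intertwiner expresses every matrix entry $f$ as a linear combination of entries of $g^{\otimes m} \otimes \bar{g}^{\otimes \bar{m}}$, each of which is a monomial of total degree $m + \bar{m}$. This gives the upper bound $f \in \mathcal{Q}_{\leq m + \bar{m}}(\Ud)$. To upgrade ``at most'' to ``exactly'' $m + \bar{m}$, I would use the Peter--Weyl orthogonality together with \cref{Prop_poly_m_projection}: the subspace $\mathcal{Q}_{\leq m + \bar{m} - 1}(\Ud)$ decomposes into $M_{\pi_\mu}$'s indexed by labels $\mu$ appearing in some $\Lambda_d(n, \bar{n})$ with strictly smaller minimal total index, so $\lambda$ is excluded, forcing $M_{\pi_\lambda} \perp \mathcal{Q}_{\leq m + \bar{m} - 1}$ and in particular $f \notin \mathcal{Q}_{\leq m + \bar{m} - 1}$.

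These two facts together yield, for every $k \in \mathbb{N}$,
\begin{equation*}
\|Q^\perp_{\leq 2k} f\|^2_{L^2}
=
\begin{cases}
1/\dim V_\lambda & \text{if } 2k \leq m + \bar{m} - 1,\\
0 & \text{if } 2k \geq m + \bar{m}.
\end{cases}
\end{equation*}
Plugging this into \cref{Def_REPf} then finishes the argument: if $\varepsilon < 1/\dim V_\lambda$, the condition $\|Q^\perp_{\leq 2k} f\|^2_{L^2} \geq \varepsilon$ is exactly $2k \leq m + \bar{m} - 1$, whose largest integer solution is $k = \lfloor (m + \bar{m} - 1)/2 \rfloor$; otherwise no $k$ satisfies the inequality and $\REPf = 0$ by the paper's convention.

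The main obstacle is the degree-exactness claim. The upper bound $f \in \mathcal{Q}_{\leq m + \bar{m}}$ is a routine consequence of the intertwining, but the matching lower bound $f \notin \mathcal{Q}_{\leq m + \bar{m} - 1}$ requires either invoking the minimality of $(m, \bar{m})$ together with \cref{Prop_poly_m_projection}, or an intrinsic argument via the central character $\pi_\lambda(e^{i\theta} I_d) = e^{i\theta \sum_i \lambda_i}$: this character fixes $\deg_g - \deg_{\bar{g}} = m - \bar{m}$ for every monomial appearing in $f$, and combined with $\deg_g \leq m$, $\deg_{\bar{g}} \leq \bar{m}$ coming from the tensor product realization, pins the total degree to exactly $m + \bar{m}$. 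Once this subtlety is handled, everything else is bookkeeping.
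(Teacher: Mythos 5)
Your proposal is correct and follows essentially the same route as the paper: Schur orthogonality gives $\|\pi_\lambda(g)_{i,j}\|_{L^2}^2 = 1/\dim V_\lambda$, Mixed Schur--Weyl duality places $f$ in degree exactly $m+\bar{m}$, and the definition of $\REPf$ does the rest. You are in fact more careful than the paper on the one nontrivial point --- that the degree is \emph{exactly} $m+\bar{m}$ for the minimal $(m,\bar{m})$, which the paper asserts without justification and which your central-character argument settles cleanly.
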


\subsection{Optimal PAC estimations}\label{subsec_Opt}

In~\cref{subsec_Opt}, we apply~\cref{Prop_poly_est_PAC} and obtain upper bounds on the PAC-learning complexity $Q_{\varepsilon, \delta}(f)$ for several different functions.
As observed in~\cref{subsec_Simp} together with~\cref{Fact_B_Q}, these upper bounds are tight for sufficiently small constants $\varepsilon, \delta$.

\subsubsection{Univariate Polynomials}
\begin{Prop}\label{Prop_Uni_variate}
Let $\alpha \in \mathbb{N}_0$ and $f(g) = g_{11}^\alpha$.
Then $Q_{\varepsilon, \delta}(f) = O(\frac{\alpha}{\varepsilon^2}\log \frac{1}{\delta})$
\end{Prop}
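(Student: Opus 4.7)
The plan is to invoke \cref{Prop_poly_est_PAC} with a carefully chosen matrix $A$ of small trace norm. Since $f(g)=g_{11}^\alpha$ is a polynomial of degree $\alpha$ in the entries $u_{ij}$ (with no occurrences of $\bar u_{ij}$), it fits the hypothesis of \cref{Prop_poly_est_PAC} with $m=\alpha$. So it suffices to exhibit a matrix $A$ satisfying
\begin{equation*}
g_{11}^\alpha \;=\; \Tr A\left(\bigoplus_{0 \leq n, n' \leq \alpha} g^{\otimes n} \otimes g^{\ast \otimes n'} \otimes I_E\right)
\end{equation*}
with $\|A\|_1 = O(1)$, and then apply the proposition directly.

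The key observation is that $g_{11}^\alpha$ appears verbatim as a single diagonal matrix entry of $g^{\otimes \alpha}$: writing $\vec{1}=(1,1,\ldots,1)\in [d]^\alpha$ and letting $\Ket{\vec{1}}$ denote the corresponding computational basis vector of $(\mathbb{C}^d)^{\otimes \alpha}$, we have $\bra{\vec 1} g^{\otimes \alpha}\ket{\vec 1} = g_{11}^\alpha$. Therefore I take $E$ to be one-dimensional and define $A$ to be zero on every block of the direct sum except the block indexed by $(n,n')=(\alpha,0)$, and on that block I set $A = \Ket{\vec{1}}\!\bra{\vec{1}}$. Then $\Tr A\bigl(\bigoplus g^{\otimes n}\otimes g^{\ast\otimes n'}\bigr) = \Tr\bigl(\Ket{\vec 1}\!\bra{\vec 1}\, g^{\otimes \alpha}\bigr) = g_{11}^\alpha$ as required.

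Next I bound $\|A\|_1$. Because $A$ is a rank-one projection (restricted to one block, zero elsewhere), its single nonzero singular value equals $1$, so $\|A\|_1 = 1$. Plugging $m=\alpha$ and $\|A\|_1=1$ into the bound of \cref{Prop_poly_est_PAC} yields
\begin{equation*}
Q_{\varepsilon,\delta}(f) \;=\; O\!\left(\frac{\|A\|_1^{2}\log(1/\delta)}{\varepsilon^2}\cdot m\right) \;=\; O\!\left(\frac{\alpha\log(1/\delta)}{\varepsilon^{2}}\right),
\end{equation*}
which is the claim. There is no real obstacle here: the entire argument is a one-line application of \cref{Prop_poly_est_PAC} once one recognizes $g_{11}^{\alpha}$ as a diagonal entry of $g^{\otimes \alpha}$. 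The only thing to be slightly careful about is that the matrix $A$ must be placed in the correct block of the direct sum (so that the other blocks contribute $0$ to the trace) and that the trivial auxiliary system $E$ is allowed.
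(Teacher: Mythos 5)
Your proof is correct and follows essentially the same route as the paper: the paper also identifies $g_{11}^\alpha$ as a matrix entry of the block $\bigoplus_{0\le n,n'\le\alpha} g^{\otimes n}\otimes g^{\ast\otimes n'}$, takes $A$ to be the corresponding matrix unit (trace norm $1$) with $E=\mathbb{C}$, and applies \cref{Prop_poly_est_PAC} with $m=\alpha$. Your explicit identification of the entry as $\bra{\vec 1}g^{\otimes\alpha}\ket{\vec 1}$ is just a slightly more concrete version of the same argument.
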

\begin{proof}
First notice that there is a pair $(i, j)$ such that the $(i, j)$ element of the matrix
$\bigoplus_{0 \leq n, n' \leq \alpha} g^{\otimes n} \otimes g^{\ast \otimes n'}$ is equal to $g_{11}^\alpha$.
We now apply \cref{Prop_poly_est_PAC} with $E := \mathbb{C}$ and the matrix $A := E_{ij}$, whose $(i, j)$ element is one, and all the other elements are zero.
We then have 
\begin{equation*}
g^\alpha_{11} =\Tr A\left(\bigoplus_{0 \leq n, n' \leq m} g^{\otimes n} \otimes g^{\ast \otimes n'} \otimes I_E\right).
\end{equation*}
Since $g^\alpha_{11} \in \mathcal{Q}_{\leq \alpha}(\Ud)$ and $\|A\|_1 = \Tr\sqrt{E_{ij}^\ast E_{ij}} = 1$, we have $Q_{\varepsilon, \delta}(f) = O(\frac{\alpha}{\varepsilon^2} \log \frac{1}{\delta})$.
\end{proof}

\subsubsection{The trace function}

\begin{Fact}\label{Fact_PAC_trace}
When $f(g) = \frac{1}{d}\Tr g$, $Q_{\varepsilon, \delta}(f) = O(\frac{1}{\varepsilon^2} \log \frac{1}{\delta})$.
\end{Fact}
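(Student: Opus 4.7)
The plan is to apply~\cref{Prop_poly_est_PAC} with $m = 1$, since $f(g) = \frac{1}{d}\Tr g = \frac{1}{d}\sum_{i=1}^d g_{ii}$ is a polynomial of degree one in the matrix elements of $g$. This mirrors exactly the strategy used in the proof of~\cref{Prop_Uni_variate}: find a matrix $A$ with small trace norm such that $f(g)$ is expressible as $\Tr A (\bigoplus_{0 \leq n, n' \leq m} g^{\otimes n} \otimes g^{\ast \otimes n'} \otimes I_E)$.

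First I would note that the direct sum $\bigoplus_{0 \leq n, n' \leq 1} g^{\otimes n} \otimes g^{\ast \otimes n'}$ contains four blocks corresponding to $(n, n') \in \{(0,0),(0,1),(1,0),(1,1)\}$; in particular, it contains the block $g$ itself (from $(n, n') = (1, 0)$). Taking $E = \mathbb{C}$, I would choose $A$ to be the block-diagonal matrix whose restriction to the $g$-block is $\frac{1}{d} I_d$ and which is zero on every other block. With this choice we have
\begin{equation*}
\Tr A \left(\bigoplus_{0 \leq n, n' \leq 1} g^{\otimes n} \otimes g^{\ast \otimes n'} \otimes I_E\right)
= \Tr\left(\frac{1}{d} I_d \cdot g\right)
= \frac{1}{d}\Tr g,
\end{equation*}
so the constraint of~\cref{Prop_poly_est_PAC} is satisfied.

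Next I would compute $\|A\|_1$. Since the singular values of a block-diagonal matrix are the union of the singular values of its blocks, and the only nonzero block $\frac{1}{d} I_d$ contributes $d$ singular values each equal to $\frac{1}{d}$, we obtain $\|A\|_1 = d \cdot \frac{1}{d} = 1$. Plugging $\|A\|_1 = 1$ and $m = 1$ into the bound of~\cref{Prop_poly_est_PAC} immediately gives $Q_{\varepsilon, \delta}(f) = O\!\left(\frac{\log(1/\delta)}{\varepsilon^2}\right)$, as claimed.

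There is essentially no obstacle here: the only point requiring any care is verifying the embedding of $\frac{1}{d} I_d$ into the larger direct sum yields both the correct trace expression and a trace norm of exactly $1$ (not $d$, which would ruin the bound). The normalization factor $\frac{1}{d}$ in front of $\Tr g$ is precisely what makes this work; without it, $A$ would be $I_d$ on the $g$-block with $\|A\|_1 = d$, yielding a much weaker bound. Since the normalization factor compensates exactly for the dimension, we get a dimension-independent sample complexity, consistent with the well-known fact that $\frac{1}{d}\Tr g$ can be estimated by a constant number of standard Hadamard tests.
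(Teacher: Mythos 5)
Your proposal is correct and follows essentially the same route as the paper: apply \cref{Prop_poly_est_PAC} with $m=1$ and a matrix $A$ supported on the $g$-block with $\|A\|_1 = 1$ (the paper writes this block as $g\otimes I$ under its convention $g^{\otimes 0}=I_d$ and takes $A = \tfrac{1}{d}(I\otimes E_{11})$, which is the same computation with the same trace norm). Your remark that the $\tfrac{1}{d}$ normalization is exactly what keeps $\|A\|_1=1$ is the right point of care.
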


\begin{proof}
First observe that $\bigoplus_{0 \leq n, n' \leq 1} g^{\otimes n} \otimes g^{\ast \otimes n'} = g \otimes I \oplus I \otimes g^\ast \oplus g \otimes g^\ast$.
Therefore define $A := \frac{1}{d}(I \otimes E_{11}) \oplus 0 \oplus 0$ and we obtain
\begin{equation*}
\Tr A \bigoplus_{0 \leq n, n' \leq 1} g^{\otimes n} \otimes g^{\ast \otimes n'} = \frac{1}{d} \Tr g \otimes E_{11} = \frac{1}{d} \Tr g.
\end{equation*}
Since $\|A\|_1 = 1$, we obtain the desired statement.
\end{proof}

\subsubsection{Irreducible representations $\pi_\lambda(g)_{i, j}$}

\begin{Prop}\label{Prop_PAC_irreps}
For any unitary irreducible representation~$(\pi_\lambda, V_\lambda)$ whose label $\lambda$ belongs to a set $\Lambda_d(m, \bar{m})$,
\begin{equation*}
Q_{\varepsilon, \delta}(f) = O\left(\frac{m + \overline{m}}{\varepsilon^2} \log \frac{1}{\delta}\right)
\end{equation*}
where $f(g) = \pi_\lambda(g)_{i, j}$.
\end{Prop}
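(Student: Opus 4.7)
The plan is to invoke Proposition~\ref{Prop_poly_est_PAC} after constructing a matrix $A$ of trace-norm one such that $\pi_\lambda(g)_{i,j}$ admits the representation \eqref{eq_Prop_poly_PAC} with degree bound $m + \bar{m}$. The essential input is Mixed Schur--Weyl duality (Fact~\ref{Fact_Mixed-Schur-Weyl}): since $\lambda \in \Lambda_d(m, \bar{m})$, the irrep $\pi_\lambda$ appears as a direct summand in the decomposition of $g^{\otimes m} \otimes \bar{g}^{\otimes \bar{m}}$, and by a standard character argument (the same one sketched in the commented-out corollary in the source) the analogous decomposition holds for $g^{\otimes m} \otimes g^{\ast \otimes \bar{m}}$ after conjugating by a suitable unitary $W$.

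Given such a $W$ with $W^\ast (g^{\otimes m} \otimes g^{\ast \otimes \bar{m}}) W = \bigoplus_{\mu \in \Lambda_d(m, \bar{m})} \pi_\mu(g) \otimes I_{m_\mu}$, I fix any multiplicity index and let $|u_i\rangle, |u_j\rangle$ be the basis vectors that pick out the $i$-th and $j$-th basis positions inside the chosen copy of $\pi_\lambda$ on the right-hand side. Setting $|\varphi\rangle := W|u_i\rangle$ and $|\psi\rangle := W|u_j\rangle$, the target matrix element becomes
$$\pi_\lambda(g)_{i,j} = \langle \varphi | \bigl(g^{\otimes m} \otimes g^{\ast \otimes \bar{m}}\bigr) | \psi \rangle = \Tr\bigl(|\psi\rangle\langle\varphi|\cdot (g^{\otimes m} \otimes g^{\ast \otimes \bar{m}})\bigr).$$
With $E := \mathbb{C}$, I define $A$ to vanish on every block of $\bigoplus_{0 \leq n, n' \leq m + \bar{m}} g^{\otimes n} \otimes g^{\ast \otimes n'}$ other than the $(n,n') = (m, \bar{m})$ block, where it equals $|\psi\rangle\langle\varphi|$. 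By construction, identity~\eqref{eq_Prop_poly_PAC} is satisfied for $f(g) = \pi_\lambda(g)_{i,j}$.

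The trace norm of $A$ equals that of the rank-one operator $|\psi\rangle\langle\varphi|$, which is $\|\psi\|_2 \|\varphi\|_2 = 1$ since $W$ is unitary and $|u_i\rangle, |u_j\rangle$ are unit vectors. Plugging $\|A\|_1 = 1$ together with the degree bound $m+\bar{m}$ into Proposition~\ref{Prop_poly_est_PAC} yields the desired $O\!\left(\tfrac{m + \bar{m}}{\varepsilon^2}\log\tfrac{1}{\delta}\right)$ bound. The only mildly delicate point is the passage from $\bar{g}^{\otimes \bar{m}}$ (as stated in Fact~\ref{Fact_Mixed-Schur-Weyl}) to $g^{\ast \otimes \bar{m}}$; this is exactly the content of the commented corollary in the source and rests on the observation that the transpose of an irrep is unitarily equivalent to itself, so the two tensor products have identical characters and decompose into the same summands. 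Everything else is bookkeeping.
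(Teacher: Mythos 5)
Your overall plan (a rank-one, trace-norm-one $A$ supported on the $(m,\bar m)$ block, followed by Proposition~\ref{Prop_poly_est_PAC}) matches the paper's, and the final conclusion is correct, but the step you rely on to produce the vectors $|\varphi\rangle,|\psi\rangle$ is false as stated. You claim a fixed unitary $W$ with $W^\ast\left(g^{\otimes m}\otimes g^{\ast\otimes\bar m}\right)W=\bigoplus_\mu\pi_\mu(g)\otimes I_{m_\mu}$ for all $g$. No such $W$ exists once $m,\bar m\ge 1$ and $d\ge 2$: the right-hand side is a group homomorphism in $g$, while the left-hand side is not, because $g\mapsto g^\ast$ is an anti-homomorphism ($(gh)^\ast=h^\ast g^\ast$). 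The character argument you import from the commented-out corollary breaks at exactly this point: $g\mapsto\pi(g)^\T$ is not a representation (it is an anti-representation), so the identity $\Tr\pi(g)=\Tr\pi(g)^\T$ does not produce a $g$-independent unitary intertwiner between $\pi(g)^\T$ and $\pi(g)$; the map genuinely equivalent to the dual representation is $g\mapsto\pi(g^{-1})^\T=\overline{\pi(g)}$, not $g\mapsto\pi(g)^\T$. This is presumably why that corollary is commented out of the source.

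The paper's proof repairs this by staying inside honest representations as long as possible: it decomposes $g^{\otimes m}\otimes\bar g^{\otimes\bar m}$ (which \emph{is} a representation) via Schur--Weyl duality and highest-weight theory to isolate a block $\pi_{(\lambda_+,\lambda_-)}(g)$, and only then applies a partial transpose on the subsystem carrying $\bar g^{\otimes\bar m}$ (padded by $I_E$ so that \cref{Lem_partial} applies) to convert $\bar g^{\otimes\bar m}$ into $g^{\ast\otimes\bar m}$. The partial transpose is linear but is not conjugation by a unitary; by \cref{Lem_partial} it keeps the transposed block $\pi^\T_{(\lambda_+,\lambda_-)}(g)$ in the top-left corner while filling the off-diagonal blocks with garbage. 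Since the test matrix $\tilde A=\overline{W}\left[E_{ij}\oplus 0\right]W^\T\oplus 0$ is supported only on that corner and still has trace norm $1$, the trace extracts $\pi_{\lambda_0}(g)_{i,j}$ and Proposition~\ref{Prop_poly_est_PAC} gives the bound. If you want to keep your cleaner ``inner product of two fixed unit vectors'' formulation, you must route the construction of those vectors through this partial-transpose step (or an equivalent argument); as written, their existence is asserted but not established.
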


\begin{proof}
For any $\lambda_0 = (\lambda_+, \lambda_-) \in \Lambda_d(m, \overline{m})$, there are unitaries $U, V$ and $W$ such that
\begin{align*}
U^\ast g^{\otimes m} U &= \bigoplus_{\lambda \in \Lambda_{m, d}} \pi_\lambda(g) \otimes I = \pi_{\lambda_+}(g) \oplus \Garb, \\
V^\ast \bar{g}^{\otimes m} V &= \bigoplus_{\overline{\lambda} \in \bar{\Lambda}_{m, d}} \pi_{\bar{\lambda}}(g) \otimes I = \pi_{\lambda_+}(g) \oplus \Garb, \\
W^\ast \pi_{\lambda_+}(g) \otimes \pi_{\lambda_-}(g) W &= \pi_{(\lambda_+, \lambda_-)}(g) \oplus \Garb
\end{align*}
by Schur--Weyl duality, where the last expression immediately comes from the highest weight theory.
Therefore for any quantum system $E$,
\begin{equation}\label{eq_Prop_PAC_irreps}
\left[U^\ast \otimes V^\ast g^{\otimes m} \otimes \overline{g}^{\otimes \bar{m}} U \otimes V \right]\otimes I_E
= W [\pi_{(\lambda_+, \lambda_-)}(g) \oplus \Garb_1] W^\ast \oplus \Garb_2.
\end{equation}
Now let $E := \mathbb{C}^{\dim (\pi_{\lambda_+} \otimes \pi_{\lambda_-})}$ and take the partial transpose over the systems $BE$, where the system $B$ corresponds to the space on which $\bar{g}^{\otimes \bar{m}}$ of the RHS acts.
This changes Equation~\eqref{eq_Prop_PAC_irreps} into
\begin{equation*}
\left[U^\ast \otimes \bar{V} g^{\otimes m} \otimes g^{\ast \otimes \bar{m}} U \otimes V^\T \right]\otimes I_E
= \mqty( \overline{W} [\pi^\T_{(\lambda_+, \lambda_-)}(g) \oplus \Garb_1] W^\T & \Garb_a\\
\Garb_b & \Garb_c
)
\end{equation*}
from~\cref{Lem_partial}, due to the definition of $E$.
Therefore taking $\tilde{A}:= \overline{W} [E_{ij} \oplus 0_{\Garb_1}]W^\T \oplus 0_{\Garb_2}$, we have
\begin{align*}
\Tr \tilde{A}\left[U^\ast \otimes \bar{V} g^{\otimes m} \otimes g^{\ast \otimes \bar{m}} U \otimes V^\T \right]\otimes I_E
&= \Tr\left[ \tilde{A} 
\mqty( \overline{W} [\pi^\T_{(\lambda_+, \lambda_-)}(g) \oplus \Garb_1] W^\T & \Garb_a\\
\Garb_b & \Garb_c
)\right]\\
&= \pi_{\lambda_0}(g)_{i, j}.
\end{align*}
Finally let $A = \left(U \otimes V^\T \tilde{A} U^\ast \otimes \bar{V}\right) \oplus 0$ on the space for $\bigoplus_{0 \leq n, n' \leq m + \bar{m}} g^{\otimes n} \otimes g^{\ast \otimes n'} \otimes I_E$, 
\begin{align*}
 \Tr A \bigoplus_{0 \leq n, n' \leq m + \bar{m}} g^{\otimes n} \otimes g^{\ast \otimes n'} \otimes I_E
&= \Tr\left[ \tilde{A} 
\mqty( \overline{W} [\pi^\T_{(\lambda_+, \lambda_-)}(g) \oplus \Garb_1] W^\T & \Garb_a\\
\Garb_b & \Garb_c
)\right]\\
&= \pi_{\lambda_0}(g)_{i, j}.
\end{align*}
Since $\|A\|_1 =\|U \otimes V^\T \tilde{A} U^\ast \otimes \bar{V}\|_1 = \|\tilde{A}\|_1 =  1$, we obtain the desired statement.
\end{proof}

\section*{Acknowledgement}
The author would like to thank his boss Michał Oszmaniec for his kindness and support.
The author acknowledges support from National Science Center, Poland
within the QuantERA III Programme (No 2023/05/Y/ST2/00140 acronym Tuquan) that
has received funding from the European Union's Horizon 2020 program.

\bibliography{
/Users/daikisuruga/Dropbox/Citations/mathematics_D, 
/Users/daikisuruga/Dropbox/Citations/computational_D, 
/Users/daikisuruga/Dropbox/Citations/query_D, 
/Users/daikisuruga/Dropbox/Citations/quant_info_D, 
/Users/daikisuruga/Dropbox/Citations/comm_comp_D, 
/Users/daikisuruga//Dropbox/Citations/books_D,
/Users/daikisuruga//Dropbox/Citations/tomography_D
}
\bibliographystyle{alpha}

\appendix
\section{Appendix}\label{sec_App}
Here we give proofs of several statements that have to place in Appendix.

For any $m \in \mathbb{N}_0$, define a subspace of $L^2(\Ud)$ as 
\begin{equation*}
\mathcal{P}_m(\Ud)
:= \mathrm{span}_\mathbb{C}\left\{ g_{11}^{x_{11}}g_{12}^{x_{12}} \cdots g_{dd}^{x_{dd}} \mid x_{11} + x_{12} + \cdots x_{dd} = m \right\}
\end{equation*}
where $g_{ij}$ denotes the $(i, j)$ entry of $g \in \Ud$. This is the space of multi-variate polynomials in matrix entry of $g$, whose degree is at most $m$. 
We also define another subspace $\overline{\mathcal{P}}_m(\Ud)$ as
\begin{equation*}
\overline{\mathcal{P}}_m(\Ud)
:= \mathrm{span}_\mathbb{C}\left\{ \bar{g}_{11}^{x_{11}}\bar{g}_{12}^{x_{12}} \cdots \bar{g}_{dd}^{x_{dd}} \mid x_{11} + x_{12} + \cdots x_{dd} = m \right\}.
\end{equation*}
The subspaces $\mathcal{P}_m(\Ud), \overline{\mathcal{P}}_m(\Ud)~ (m \in \mathbb{N}_0)$ are finite dimensional, and therefore closed in $L^2(\Ud)$. 
Note that $\mathcal{P}_0(\Ud) = \overline{\mathcal{P}}_0(\Ud) = \mathbb{C}$.
Analogously, define 
\begin{equation*}
\mathcal{P}_{\leq m}(\Ud)
:= 
\bigoplus_{0 \leq m' \leq m} \mathcal{P}_{m'}(\Ud)
\text{~and~}
\overline{\mathcal{P}}_{\leq m}(\Ud)
:= 
\bigoplus_{1 \leq m' \leq m} \overline{\mathcal{P}}_{m'}(\Ud)
\end{equation*}
that are also finite-dimensional and therefore closed. Note that for convenience $\overline{\mathcal{P}}_0$ is not included in case of $\overline{\mathcal{P}}$.
\begin{Lem}\label{Lem_poly_decomp}

\begin{equation*}
\bigoplus_{\lambda \in \Lambda_{m, d}} M_{\pi_\lambda}
= \mathcal{P}_m(\Ud)
\end{equation*}
\end{Lem}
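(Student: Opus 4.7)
The plan is to show both inclusions by identifying both sides with the span of matrix entries of the tensor representation $g \mapsto g^{\otimes m}$, and then appeal to Peter--Weyl for the direct sum property.

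First, I would observe that the matrix entries of $g^{\otimes m}$ are precisely the length-$m$ products $g_{i_1 j_1} g_{i_2 j_2} \cdots g_{i_m j_m}$ with $i_k, j_k \in \{1,\dots,d\}$. Any monomial $g_{11}^{x_{11}} g_{12}^{x_{12}} \cdots g_{dd}^{x_{dd}}$ with $\sum x_{ij} = m$ is of this form (by grouping repeated factors), and conversely every such product rearranges to such a monomial. Hence
\begin{equation*}
\mathcal{P}_m(\Ud)
= \operatorname{span}_{\mathbb{C}}\bigl\{ (g^{\otimes m})_{(i_1\cdots i_m),(j_1\cdots j_m)} \mid i_k,j_k \in \{1,\dots,d\}\bigr\}.
\end{equation*}

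Next, apply Schur--Weyl duality (\cref{Fact_Schur-Weyl}): there is a fixed unitary $U_{\mathsf{Sch}}$ such that
\begin{equation*}
U_{\mathsf{Sch}}\, g^{\otimes m}\, U_{\mathsf{Sch}}^{*}
= \bigoplus_{\lambda \in \Lambda_{m,d}} I_\lambda \otimes \pi_\lambda(g) \otimes I_{m_\lambda}.
\end{equation*}
Since $U_{\mathsf{Sch}}$ does not depend on $g$, the span (over $\mathbb{C}$) of matrix entries on both sides, viewed as functions of $g$, coincides. The entries on the right-hand side are $\mathbb{C}$-linear combinations of the $\pi_\lambda(g)_{i,j}$ for $\lambda \in \Lambda_{m,d}$, and conversely every $\pi_\lambda(g)_{i,j}$ with $\lambda \in \Lambda_{m,d}$ appears as such an entry. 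Therefore
\begin{equation*}
\mathcal{P}_m(\Ud)
= \operatorname{span}_{\mathbb{C}}\bigl\{ \pi_\lambda(g)_{i,j} \mid \lambda \in \Lambda_{m,d},\ 1 \le i,j \le \dim \pi_\lambda\bigr\}
= \sum_{\lambda \in \Lambda_{m,d}} M_{\pi_\lambda}.
\end{equation*}

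Finally, the Peter--Weyl theorem guarantees that the subspaces $M_{\pi_\lambda}$ are mutually orthogonal in $L^2(\Ud)$, so the sum on the right is automatically a direct (Hilbert) sum, yielding the claimed equality $\mathcal{P}_m(\Ud) = \bigoplus_{\lambda \in \Lambda_{m,d}} M_{\pi_\lambda}$.

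The only mildly subtle point is making sure that \emph{exactly} the labels $\lambda \in \Lambda_{m,d}$ (i.e.\ those with $\lambda_d \ge 0$ and $\sum_i \lambda_i = m$) appear in the Schur--Weyl decomposition of $g^{\otimes m}$; this is the standard statement that $g^{\otimes m}$ decomposes into the polynomial irreducibles of degree $m$, parametrized by partitions of $m$ with at most $d$ parts, which is exactly $\Lambda_{m,d}$ as defined in the preliminaries.
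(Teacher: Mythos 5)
Your proposal is correct and follows essentially the same route as the paper: identify $\mathcal{P}_m(\Ud)$ with the span of the matrix entries of $g^{\otimes m}$, conjugate by the fixed Schur--Weyl unitary to trade these entries for linear combinations of the $\pi_\lambda(g)_{i,j}$ with $\lambda \in \Lambda_{m,d}$ (and back), and invoke Peter--Weyl for the orthogonality making the sum direct. Your explicit remarks on the direct-sum step and on which labels occur are points the paper leaves implicit, but the argument is the same.
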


\begin{proof}
By~\cref{Fact_Schur-Weyl}, 
\begin{equation}\label{eq_poly_decomposition}
U_\mathsf{Sch} g^{\otimes m} U^\ast_\mathsf{Sch}
=
\mqty(\pi_{\lambda_1}^\Ud(g) \otimes I_{\dim \pi^\Sm_{\lambda_1}} &\mathbf{0}& \cdots & \mathbf{0}\\
\mathbf{0} & \pi_{\lambda_2}^\Ud(g) \otimes I_{\dim \pi^\Sm_{\lambda_2}} & \cdots & \mathbf{0}\\
\vdots &  & \ddots & \vdots \\
\mathbf{0} & \mathbf{0} & \mathbf{0} & \pi_{\lambda_{\max}}^\Ud(g) \otimes I_{\dim \pi^\Sm_{\lambda_{\max}}}\\
)
\end{equation}
where $\{\lambda_1, \ldots, \lambda_{\max}\} = \Lambda_{m, \leq d}$.
By the definition of tensor products, matrix entries of $g^{\otimes m}$ form a basis of $\mathcal{P}_m(\Ud)$, and therefore
Equation~\eqref{eq_poly_decomposition} implies any $\pi_\lambda(g)^\Ud~(\lambda \in \Lambda_{m, \leq d})$ may be written as a linear combination of elements in $\mathcal{P}_m(\Ud)$. 
This shows
\begin{equation*}
\bigoplus_{\lambda \in \Lambda_{m, \leq d}} M_{\pi_\lambda}
\subseteq \mathcal{P}_m(\Ud).
\end{equation*}
To show the opposite direction, use
\begin{equation}
 g^{\otimes m} =
U^\ast_\mathsf{Sch}
\mqty(\pi_{\lambda_1}^\Ud(g) \otimes I_{\dim \pi^\Sm_{\lambda_1}} &\mathbf{0}& \cdots & \mathbf{0}\\
\mathbf{0} & \pi_{\lambda_2}^\Ud(g) \otimes I_{\dim \pi^\Sm_{\lambda_2}} & \cdots & \mathbf{0}\\
\vdots &  & \ddots & \vdots \\
\mathbf{0} & \mathbf{0} & \mathbf{0} & \pi_{\lambda_{\max}}^\Ud(g) \otimes I_{\dim \pi^\Sm_{\lambda_{\max}}}\\
)U_\mathsf{Sch}
\end{equation}
and follow the same proof strategy.
\end{proof}

\begin{Lem}

\begin{equation*}
\bigoplus_{\lambda \in \overline{\Lambda}_{m, \leq d}} M_{\pi_\lambda}
= \overline{\mathcal{P}}_m(\Ud)
\end{equation*}
where 
\begin{equation*}
\overline{\Lambda}_{m, d} := \{\bar{\lambda} \mid \lambda \in \Lambda_{m, d}\} 
= \left\{\lambda \in \mathbb{Z}_+^d \mid \sum_{i \leq d} \lambda_i = -m, \lambda_1 \leq 0\right\}
\end{equation*}
\end{Lem}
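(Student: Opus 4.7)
The plan is to mimic the proof of~\cref{Lem_poly_decomp}, replacing $g^{\otimes m}$ by its entrywise complex conjugate $\bar g^{\otimes m}$ and invoking the identification $\overline{\pi_\lambda} \cong \pi_{\bar\lambda}$ supplied by~\cref{Fact_conjugate_label_Ud}.

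Concretely, I would first start from the Schur--Weyl decomposition of $g^{\otimes m}$ provided by~\cref{Fact_Schur-Weyl}:
\begin{equation*}
U_\mathsf{Sch}\, g^{\otimes m}\, U_\mathsf{Sch}^\ast
= \bigoplus_{\lambda \in \Lambda_{m, d}} \pi_\lambda(g) \otimes I_{m_\lambda}.
\end{equation*}
Taking entrywise complex conjugation on both sides (which preserves the unitary $U_\mathsf{Sch}$ up to replacing it by $\overline{U_\mathsf{Sch}}$, another unitary) yields
\begin{equation*}
\overline{U_\mathsf{Sch}}\, \bar g^{\otimes m}\, \overline{U_\mathsf{Sch}}^{\,\ast}
= \bigoplus_{\lambda \in \Lambda_{m, d}} \overline{\pi_\lambda(g)} \otimes I_{m_\lambda}.
\end{equation*}
By~\cref{Fact_conjugate_label_Ud}, each conjugate irreducible representation $\overline{\pi_\lambda}$ is unitarily equivalent to $\pi_{\bar\lambda}$, so there exists a block-diagonal unitary $V$ such that
\begin{equation*}
V \overline{U_\mathsf{Sch}}\, \bar g^{\otimes m}\, \overline{U_\mathsf{Sch}}^{\,\ast} V^\ast
= \bigoplus_{\lambda \in \Lambda_{m, d}} \pi_{\bar\lambda}(g) \otimes I_{m_\lambda}
= \bigoplus_{\mu \in \overline{\Lambda}_{m, d}} \pi_{\mu}(g) \otimes I_{m_{\bar\mu}},
\end{equation*}
using the bijection $\lambda \mapsto \bar\lambda$ between $\Lambda_{m, d}$ and $\overline{\Lambda}_{m, d}$.

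From here I would argue both containments exactly as in~\cref{Lem_poly_decomp}. For the inclusion $\bigoplus_{\mu \in \overline{\Lambda}_{m, d}} M_{\pi_\mu} \subseteq \overline{\mathcal{P}}_m(\Ud)$: the right-hand side of the previous display shows that every matrix entry $\pi_\mu(g)_{i,j}$ with $\mu \in \overline{\Lambda}_{m, d}$ is a $\mathbb{C}$-linear combination of entries of $\bar g^{\otimes m}$, which are degree-$m$ monomials in the $\bar g_{ij}$'s, i.e., elements of $\overline{\mathcal{P}}_m(\Ud)$. For the opposite inclusion, invert the relation (using the conjugate-transposed unitary change of basis) to express each entry of $\bar g^{\otimes m}$ as a linear combination of the $\pi_\mu(g)_{i,j}$ with $\mu \in \overline{\Lambda}_{m, d}$; since those entries span $\overline{\mathcal{P}}_m(\Ud)$ by definition, we obtain $\overline{\mathcal{P}}_m(\Ud) \subseteq \bigoplus_{\mu \in \overline{\Lambda}_{m, d}} M_{\pi_\mu}$.

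The only nontrivial point I expect is bookkeeping the label transformation. Specifically, one must verify that conjugating $\pi_\lambda$ and relabelling by $\bar\lambda = (-\lambda_d, \ldots, -\lambda_1)$ transports $\Lambda_{m, d}$ (entries sum to $m$, all nonnegative) bijectively onto $\overline{\Lambda}_{m, d}$ (entries sum to $-m$, all nonpositive), which is immediate from the definition of $\bar\lambda$ given in~\cref{Fact_conjugate_label_Ud}. The direct-sum property of the right-hand side follows from the Peter--Weyl theorem, as in the proof of~\cref{Lem_poly_decomp}. This completes the proposed argument.
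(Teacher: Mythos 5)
Your proposal is correct and follows exactly the route the paper takes: conjugate the Schur--Weyl decomposition of $g^{\otimes m}$, invoke~\cref{Fact_conjugate_label_Ud} to relabel $\overline{\pi_\lambda}$ as $\pi_{\bar\lambda}$, and repeat the two-inclusion argument of~\cref{Lem_poly_decomp}. The paper states this in one sentence; you have merely filled in the same steps explicitly.
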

\begin{proof}
Take the complex conjugation on both sides of Equation~\eqref{eq_poly_decomposition} and apply the same proof technique as~\cref{Lem_poly_decomp} together with~\cref{Fact_conjugate_label_Ud}.
\end{proof}

\begin{Lem}\label{Lem_partial}
Let $A, B, C$ and $D$ be rectangular matrices satisfying $A \otimes B = C \oplus D$. Then taking the partial transpose on B of the RHS yields
\begin{equation*}
A \otimes B^\T = 
\mqty(C^\T  & D_1\\
D_2 & D_3
)
\end{equation*}
for some $D_i$'s,
when $\dim B \geq \dim C$.
\end{Lem}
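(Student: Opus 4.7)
My plan is to exploit the block-matrix structure of $A \otimes B$ in the standard basis and compute directly how partial transposition on the $B$ system affects its top-left corner. Let $n := \dim B$ and $k := \dim C$, so that $n \geq k$ by hypothesis. The starting observation is that $A \otimes B$ decomposes as a block matrix whose $(i, j)$-block of size $n \times n$ equals $A_{ij} B$; partial transposition on $B$ is precisely the operation that replaces each such block $A_{ij} B$ by $A_{ij} B^\T$, yielding $A \otimes B^\T$.

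The main step is then to locate the block $C$ inside this block-matrix picture. Because $k \leq n$, the top-left $k \times k$ submatrix of $A \otimes B$---which equals $C$ by the hypothesis $A \otimes B = C \oplus D$---is contained entirely inside the single top-left $n \times n$ block $A_{11} B$. Reading off entries yields $C = A_{11} \cdot B|_{[k] \times [k]}$, where $B|_{[k] \times [k]}$ denotes the leading principal $k \times k$ submatrix of $B$. Applying partial transposition on $B$, the top-left $n \times n$ block becomes $A_{11} B^\T$, and its leading $k \times k$ submatrix is
\begin{equation*}
A_{11} \cdot (B^\T)|_{[k] \times [k]} = A_{11} \cdot (B|_{[k] \times [k]})^\T = C^\T,
\end{equation*}
which fills the top-left slot of the claimed block form. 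The remaining entries of $A \otimes B^\T$ are not pinned down by this calculation, so I simply name the remaining three blocks $D_1$, $D_2$, $D_3$.

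I do not anticipate a substantive obstacle: the argument is essentially block-matrix bookkeeping once the assumption $\dim B \geq \dim C$ is used to confine $C$ to a single $B$-block. The only points that require a brief check are the identity $(B^\T)|_{[k] \times [k]} = (B|_{[k] \times [k]})^\T$, which is immediate from $(B^\T)_{ij} = B_{ji}$, and the edge case $A_{11} = 0$, in which both $C$ and the extracted submatrix of $A \otimes B^\T$ vanish so that $C^\T = 0$ holds trivially.
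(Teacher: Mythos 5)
Your proof is correct and follows essentially the same route as the paper's: both arguments reduce to the observation that partial transposition on the $B$ factor transposes each $\dim B \times \dim B$ block in place, so the hypothesis $\dim B \geq \dim C$ confines $C$ to the top-left block, whose transposition puts $C^\T$ in the corner. The only cosmetic difference is that you track the corner through the left-hand side $A \otimes B$ (identifying $C = A_{11} B|_{[k]\times[k]}$), whereas the paper applies the block-wise description of the partial transpose directly to $C \oplus D$; the content is identical.
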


\begin{proof}
Let $X \in \mathbf{M}(n\times m, n \times m, \mathbb{C})$ be a matrix on the space $V \otimes W$ where $\dim V = n$ and $\dim W = m$.
First observe the partial transpose of X on the space $W$ is represented as
\begin{equation*}
\mqty(
X_{11}^\T & X_{12}^\T & \cdots & X_{1n}^\T \\
X_{21}^\T  & X_{22}^\T  & \cdots & X_{2n}^\T \\
\vdots &  & \ddots & \vdots\\
X_{n1}^\T & X_{n2}^\T &  & X_{nn}^\T
),
\end{equation*}
where $X = (X_{ij})_{1 \leq i, j \leq n}$ is a block decomposition; each $X_{ij}$ is an $m\times m$ matrix. 
Therefore the partial transpose of the matrix $C \oplus D = \mqty(C & 0\\ 0 & D)$ on the space whose dimension larger than or equal to $\dim C$ satisfies
\begin{equation*}
\mqty(C^\T  & D_1\\
D_2 & D_3
).
\end{equation*}
This shows the statement.

\end{proof}

\end{document}